\newcommand{\nn}{\nonumber\\}
\newcommand{\beq}{\begin{equation}}
\newcommand{\eeq}{\end{equation}}
\newcommand{\bigO}[1]{\mathcal{O}\left( #1 \right)}
\newtheorem{theorem}{Theorem}[section]
\newcommand{\eq}[1]{(\ref{eq:#1})}
\newcommand{\fig}[1]{\hyperref[fig:#1]{Figure~\ref*{fig:#1}}}
\renewcommand{\sec}[1]{\hyperref[sec:#1]{Section~\ref*{sec:#1}}}
\newcommand{\timesteps}{r}
\newcommand{\smalltime}{M}
\newcommand{\totalrows}{R}
\newcommand{\solrat}{\mathcal{R}}
\newcommand{\dervbnd}{\mathcal{D}}
\newenvironment{proof}{\noindent \textbf{{Proof.~} }}{\hfill $\square$}
\begin{document}

\title{Template demonstrating the quantumarticle document class}

\title{Quantum algorithm for time-dependent differential equations using Dyson series}
\author{Dominic W. Berry and Pedro C. S. Costa}

\affiliation{School of Mathematical and Physical Sciences, Macquarie University, Sydney, New South Wales 2109, Australia}

\begin{abstract}
Time-dependent linear differential equations are a common type of problem that needs to be solved in classical physics. Here we provide a quantum algorithm for solving time-dependent linear differential equations with logarithmic dependence of the complexity on the error and derivative. As usual, there is an exponential improvement over classical approaches in the scaling of the complexity with the dimension, with the caveat that the solution is encoded in the amplitudes of a quantum state. Our method is to encode the Dyson series in a system of linear equations, then solve via the optimal quantum linear equation solver. Our method also provides a simplified approach in the case of time-independent differential equations.
\end{abstract}

\maketitle

Quantum computers offer tremendous potential advantages in computing speed, but it is a long-standing challenge to find speedups to important computational tasks.
A major advance for quantum algorithms was the development of a way of solving systems of linear equations by Harrow, Hassidim, and Lloyd \cite{Harrow_2009}.
This algorithm provides an exponential improvement in complexity in terms of the dimension as compared to classical solution, with some caveats.
In particular, the matrix needs to be given in an oracular way (rather than as classical data), and the solution is encoded in amplitudes of a quantum state.

There has therefore been a great deal of follow-up work on applications of solving linear equations where the result is potentially useful despite these limitations.
For example, in discretised partial differential equations (PDEs), the discretisation yields a large set of simultaneous equations.
Then one may aim to obtain some global feature of the solution which may be obtained by sampling from the prepared quantum state.
This was the principle used in \cite{Clader13}, and there have been considerable further advances since then \cite{Ashley2016,Childs2021}.

There is also the question of how to solve an \emph{ordinary} differential equation (ODE).
That is, spatial dimensions are not given explicitly (though the set of equations may have been obtained by a spatial discretization of a PDE), and the task is to solve the time evolution.
The original algorithm for this task used linear multistep methods \cite{BerryJPA14}, and a further improvement was to use a Taylor series encoded in a larger system of linear equations to obtain complexity logarithmic in the allowable precision \cite{BerryCMP17}.
For this and other quantum algorithms for solving differential equations, it is required that the equations are dissipative in order to provide an efficient solution.
Krovi \cite{Krovi} showed how to solve the time-independent case with an alternative dissipative condition based on the logarithmic norm rather than the eigenvalues of the matrix having non-positive real part as in \cite{BerryCMP17}.

That then leaves open the question of how to solve time-dependent differential equations.
Algorithms for time-dependent Hamiltonian evolution have been given based on a Dyson series in \cite{Kieferova2019,Low2018}.
A method for time-dependent differential equations was given in Ref.~\cite{Childs2020}.
That used a spectral method to provide near-linear dependence on $T$, with poly-logarithmic factors in $T$ and the allowable error.
The poly-logarithmic factors were not explicitly given in \cite{Childs2020}, but they are quite large.
In particular, $n$ is chosen logarithmically in Eq.~(8.6) of that work, then there is a factor of $n^4$ for the complexity in Eq.~(8.15) of Ref.~\cite{Childs2020}.
Another limitation of the approach of Ref.~\cite{Childs2020} is that it requires smoothness of the differential equation, with a complexity depending on arbitrary-order derivatives of the parameters.
Reference \cite{Fang} used a time-marching strategy for time-dependent differential equations, but gave quadratic dependence of the complexity on the overall evolution time $T$, making it considerably more costly.
Moreover, it only analyses the homogeneous case.

Here we provide an algorithm for inhomogeneous time-dependent differential equations with significantly improved logarithmic dependence on the error and linear dependence on $T$ up to logarithmic factors.
Our logarithmic factors are significantly improved over those in \cite{Childs2020}.
Moreover, we need no smoothness condition as in Ref.~\cite{Childs2020}.
Our oracle complexity is entirely independent of derivatives of the parameters, and the complexity in terms of elementary gates depends only on the log of the first derivatives of parameters.
We combine methods from both \cite{BerryCMP17} and \cite{Kieferova2019,Low2018}.
We use a block matrix similar to \cite{BerryCMP17}, but we do \emph{not} use extra lines of the block matrix to implement terms in the series as in that work.
Instead we construct this matrix via a block encoding using a Dyson series, in an analogous way as the block encodings in \cite{Kieferova2019,Low2018}.

In \sec{form} we describe the form of the solution in terms of the sum of a solution to the homogeneous equation and a particular solution.
Then in \sec{cond} we express the method in terms of a set of linear equations and determine the condition number.
In \sec{block} we describe the method to block encode the linear equations.
Lastly we use these results to give the overall complexity in \sec{final}, and conclude in \sec{conc}.

\section{Form of solution}
\label{sec:form}
In this section we summarise standard methods of solving a linear differential equation via a Dyson series.
A general ordinary linear differential equation has form
\begin{equation}
\bm{\dot x}(t) = A(t)\bm{x}(t)+\bm{b}(t), \label{eq:general0}
\end{equation}
where $\bm{b}(t) \in \mathbb{C}^{N}$ is a vector function of $t$, $A(t)\in \mathbb{C}^{N\times N}$ is a coefficient matrix and 
$\bm{x}(t)\in \mathbb{C}^{N}$ is the solution vector as a function of variable $t$. We are also given an initial condition $\bm{x}(t_0) = \bm{x}_0$. Note that we can always express a higher-order ODE as a system of first-order differential equations by defining new parameters.

We will write the general solution to \eq{general0} as
\begin{equation}
\bm{x}(t) = \bm{x}_H(t) + \bm{x}_P(t),
\label{eq:gen_sol}
\end{equation}
where $\bm{x}_H(t)$ is the solution to the homogeneous equation $\dot{\bm{x}}_H = A(t)\bm{x}_H$ and $\bm{x}_P(t)$ is a particular solution. In the next two subsections we show how to compute them.

\subsection{Solution to the homogeneous equation}
First we solve a differential equation of the form
\begin{equation}
\dot{\bm{x}}_H = A(t)\bm{x}_H.
\end{equation}
A general solution for $\bm{x}_H$ can be expressed in the form of a Dyson series $\bm{x}_H(t)=W(t,t_0) \bm{x}_H(t_0)$ with
\begin{align}
 W(t,t_0) &\coloneqq \sum_{k=0}^{\infty} \frac{1}{k!} \int_{t_0}^t dt_1 \int_{t_0}^t dt_2 \cdots \int_{t_0}^t dt_k \, \mathcal{T} A(t_1) A(t_2) \cdots A(t_k) ,   
\end{align}
where $\mathcal{T}$ indicates time ordering.
We will briefly review the algorithm for the first segment. We can explicitly order the solution and truncate the infinite sum at a cutoff $K$ to give
\begin{align}
\label{eq:trunc}
W_K(t,t_0) &\coloneqq \sum_{k=0}^{K} \int_{t_0}^t dt_1 \int_{t_0}^{t_1} dt_2 \cdots  \int_{t_0}^{t_{k-1}} dt_k\, A(t_1) A(t_2) \cdots A(t_k).
\end{align}
Note that the $k!$ disappears because the times without ordering have $k!$ permutations, so sorting them gives $k!$ multiplicity.
This expression is just equivalent to the Dyson series solution for Hamiltonian evolution used in Refs.~\cite{Kieferova2019,Low2018}, and the fact that $A$ is a more general matrix does not affect the form of the equations.
Then, using $A_{\max}=\max_t \|A(t)\|$,
\begin{align}
\left\| W_K(t,t_0) - W(t,t_0) \right\| &\le \sum_{k=K+1}^\infty \frac {(A_{\max}\Delta t)^k}{k!} \nn
&\le \frac{(A_{\max}\Delta t)^{K+1}}{(K+1)!}\exp(A_{\max}\Delta t) \nn
&= \bigO{ \frac{(A_{\max}\Delta t)^{K+1}}{(K+1)!} } ,   
\end{align}
with $\Delta t=t-t_0$, provided $\Delta t$ is chosen so $A_{\max}\Delta t$ is $\bigO{1}$.
Here and throughout the paper we are taking $\| \cdot \|$ to be the spectral norm for operators, or the 2-norm for vectors.

If the goal is to give error no larger than $\epsilon$ due to the truncation, then we can choose $K\propto\log(1/\epsilon)/\log\log(1/\epsilon)$.
To solve for a longer time we break the time into $r$ segments and use the truncated Dyson series on each segment.
Therefore the error in each segment should be no larger than $\epsilon/r$.
The choice of $K$ should then be
\begin{equation}\label{eq:choiceK}
    K = \bigO{\frac{\log(r/\epsilon)}{\log\log(r/\epsilon)}} .
\end{equation}

\subsection{Particular solution}
Generalising to an inhomogeneous equation
\beq
\dot{\bm{x}} = A(t)\bm{x}+\bm{b}(t) \, ,
\eeq
the particular solution (i.e.\ with initial condition of zero) is of the form $\bm{x}_P(t)=\bm{v}(t,t_0)$ 
with
\begin{align}
\label{eq:part_sol}
\bm{v}(t,t_0) &\coloneqq \sum_{k=1}^{\infty} \int_{t_0}^t dt_1 \int_{t_0}^{t_1} dt_2 \cdots \int_{t_0}^{t_{k-1}} dt_{k} A(t_1) A(t_2) \cdots A(t_{k-1})\bm{b}(t_{k}) \, .
\end{align}
Note that this is very similar to $W(t,t_0)$ for the homogeneous solution, except we have replaced $A(t_k)$ with $\bm{b}(t_{k})$.
Note also that, in $W(t,t_0)$ we have $k=0$ giving a product of \emph{none} of the $A$, so that term gives the identity.
Here the sum starts from $k=1$, and $k=1$ gives the integral of $\bm{b}(t_{1})$, so we can rewrite the solution as
\begin{align}
\bm{v}(t,t_0) &= \sum_{k=2}^{\infty} \int_{t_0}^t dt_1 \int_{t_0}^{t_1} dt_2 \cdots \int_{t_0}^{t_{k-1}} dt_{k} \, A(t_1)  A(t_2) \cdots A(t_{k-1})\bm{b}(t_{k}) + \int_{t_0}^t dt_1 \, \bm{b}(t_{1})\, .
\end{align}
To see that this is the correct form of the solution, one can use
\begin{align}
    \dot{\bm{x}}_P(t) &= \dot{\bm{v}}(t,t_0) \nonumber \\*
    &= \sum_{k=2}^{\infty} \int_{t_0}^{t} dt_2 \cdots \int_{t_0}^{t_{k-1}} dt_{k} \, A(t)  A(t_2) \cdots A(t_{k-1})\bm{b}(t_{k}) + \bm{b}(t)\nn
    &= A(t) \sum_{k=1}^{\infty} \int_{t_0}^{t} dt_1 \int_{t_0}^{t_1} dt_2 \cdots \int_{t_0}^{t_{k-1}} dt_{k}\, A(t_1) \cdots A(t_{k-1})\bm{b}(t_{k}) + \bm{b}(t)\nn
    &= A(t)\bm{x}_P + \bm{b}(t) \, ,
\end{align}
which is the form of the differential equation.
Truncating the solution, we have
\begin{align}
\bm{v}_K(t,t_0) &\coloneqq \sum_{k=1}^{K} \int_{t_0}^t dt_1 \int_{t_0}^{t_1} dt_2 \cdots \int_{t_0}^{t_{k-1}} dt_{k}\, A(t_1) A(t_2) \cdots A(t_{k-1})\bm{b}(t_{k}) \, .
\end{align}
The complete approximate solution, as the sum of the homogeneous and particular solutions with truncations, is then
\begin{equation}\label{eq:solstep}
  \bm{x}_K(t) = W_K(t,t_0)\bm{x}(t_0)+\bm{v}_K(t,t_0)  .
\end{equation}

Defining $b_{\max} = \max_t \| \bm{b}(t)\|$,
then the error due the truncation of the Dyson series is
\begin{align}
\| \bm{x}_K(t) - \bm{x}(t) \| &\le \left\| W_K(t,t_0) - W(t,t_0) \right\| \!\times\! \| \bm{x}(t_0) \|\nn 
&\quad + \left\|\sum_{k=K+1}^{\infty} \int_{t_0}^t dt_1 \int_{t_0}^{t_1} dt_2 \cdots \int_{t_0}^{t_{k-1}} dt_{k}\,A(t_1) A(t_2) \cdots A(t_{k-1})\bm{b}(t_{k})\right\| \nn
&\le \bigO{ \frac{(A_{\max}\Delta t)^{K+1}}{(K+1)!} \| \bm{x}(t_0) \|}  + \sum_{k=K+1}^{\infty} \frac{A_{\max}^{k-1}\Delta t^k b_{\max}}{k!} \nn
&=  \mathcal{O}\left( \frac{(A_{\max}\Delta t)^{K+1}}{(K+1)!} \| \bm{x}(t_0)\| + \frac{A_{\max}^K \Delta t^{K+1} b_{\max}}{(K+1)!} \right),
\label{eq:errorbnd}
\end{align}
where the second norm in the first line results from $\norm{\bm{v}(t,t_0)- \bm{v}_K(t,t_0)}$. Again we require that $A_{\max}\Delta t$ is $\bigO{1}$.
In our solution we will discretise the integrals, and to bound the overall error we also bound the error due to the discretisation; that is discussed in \sec{final}. 

This complete expression in Eq.~\eqref{eq:errorbnd} has an extra factor of $\| \bm{x}(t_0)\|$ because we are considering the error in the state.
The error in $W$ being no greater than $\epsilon$ implies that the error in the state is no larger than $\epsilon$ times the norm of $\bm{x}$, which is the form we require the bound on the error for our theorem.
Appropriately bounding the error in the second term in Eq.~\eqref{eq:errorbnd} can require taking $K$ somewhat larger than in Eq.~\eqref{eq:choiceK}.
The choice of $K$ is discussed further in \sec{final}.

\subsection{The time-independent solution}

We are also going to look at the case where $A \in \mathbb{C}^{N\times N}$ is time-independent, which gives our usual ODE to solve
\begin{equation}
    \bm{\dot x}(t) = A\bm{x}(t)+\bm{b}. \label{eq:general}
\end{equation}
In \cite{BerryCMP17} this was solved by using the Taylor series.
Here we apply a similar principle, except that we will encode into the block matrix in a simpler way.
Here we summarise the Taylor series solution, then give the encoding into the matrix in the next section.

The solution can easily be seen by substituting a constant $A$ into the solution for the time-varying $A$.
We obtain
\beq
\label{eq:TruncTay}
W_K(t,t_0) = \sum_{k=0}^{K} \frac {(A\Delta t)^k}{k!} ,
\eeq
and
\beq
\bm{v}_K(t,t_0) = \sum_{k=1}^{K} \frac{(A\Delta t)^{k-1}}{k!}\bm{b} \, .
\eeq
The error in the solution can be bounded as
\begin{align}
\| \bm{x}_K(t) - \bm{x}(t) \| &=  \mathcal{O} \left(\frac{(\|A\|\Delta t)^{K+1}}{(K+1)!} \| \bm{x}(t_0)\|+ \frac{\|A\|^K \Delta t^{K+1} \|\bm{b}\|}{(K+1)!} \right).
\end{align}

\section{Encoding in linear equations}
\label{sec:cond}

Next we describe how the Dyson series (or Taylor series) for the solution may be encoded into block matrices which can then be solved via the quantum algorithm for solving linear systems.
The preceding section described how the series may be applied to accurately approximate the solution over a short time.
In the usual way, for solving the solution over a longer time we divide the total time $T$ into $\timesteps$ steps of length $\Delta t = T/\timesteps$.
The solution for the shorter times is obtained by a series, and all time steps are encoded in a block matrix.

\subsection{Time-dependent equations}

\subsubsection{Encoding}

First we describe the more complicated case of the block encoding of the Dyson series solution.
This step shares many similarities with techniques for time-independent equations~\cite{BerryCMP17} but elements of the block matrix $\mathcal{A}^{-1}$ will need to be computed through integration described in the preceding section.
The initial time for the multiple steps will be taken to be zero without loss of generality, since it is always possible to apply a time shift in the definitions.
Then we can use notation $t_0$ for the starting time for individual steps of the Dyson series.

To illustrate the method, we first give the example of three time steps, where the encoding is
\beq
\label{eq:examp3}
\begin{bmatrix}
    \openone & 0 & 0 & 0 & 0 & 0 &  0 \\
    -V_1 & \openone & 0 & 0 & 0 & 0 &  0 \\
    0 & -V_2 & \openone & 0 & 0 & 0 & 0 \\
    0 & 0 & -V_3 & \openone & 0 & 0 & 0 \\
    0 & 0 & 0 & -\openone & \openone & 0 & 0 \\
    0 & 0 & 0 & 0 & -\openone & \openone & 0 \\
    0 & 0 & 0 & 0 & 0 & -\openone & \openone
\end{bmatrix}
\begin{bmatrix}
\bm{x}(0) \\ \tilde{\bm{x}}(\Delta t) \\ \tilde{\bm{x}}(2\Delta t) \\ \tilde{\bm{x}}(3\Delta t) \\ \tilde{\bm{x}}(3\Delta t) \\ \tilde{\bm{x}}(3\Delta t) \\ \tilde{\bm{x}}(3\Delta t)
\end{bmatrix}
=
\begin{bmatrix}
\bm{x}(0) \\ \bm{v}_1 \\ \bm{v}_2 \\ \bm{v}_3 \\ \bm{0} \\ \bm{0} \\ \bm{0}
\end{bmatrix} 
\eeq
where
\begin{align}
    \bm{v}_m &\coloneqq \bm{v}_K(m\Delta t,(m-1)\Delta t), \\
    V_m &\coloneqq W_K(m\Delta t,(m-1)\Delta t),
\end{align}
and $\tilde{\bm{x}}$ indicates the approximate solution for $\bm{x}$. 
The top row (numbered as $0$ here) sets the initial condition and 
rows 1 to 3 give steps of the solution for time $\Delta t$ as described by Eq.~\eq{solstep}.
These rows all describe forward evolution as 
\begin{equation}
    \tilde{\bm{x}}(m\Delta t) = V_m \bm{x}((m-1)\Delta t) + \bm{v}_m
\end{equation}
for $m\in \{1,2,3 \}$. In rows $4$ to $6$, the system is constant in time, $ \tilde{\bm{x}}(m\Delta t) = \bm{x}((m-1)\Delta t)$. Including rows after the final evolution steps might appear unnecessary, but these additional rows will boost the success probability of the resulting quantum algorithm~\cite{berry2014high}. 
Note that $\bm{v}_m$ and $V_m$ involve multiple integrals over times, which would need to be addressed via the block encoding which is described in \sec{block}.
This is a major departure from the method for time-independent differential equations from \cite{BerryCMP17}, where it was possible to encode the terms of the sum via extra lines in the block matrix.

In general, we can define the system of linear equations as
\beq
\mathcal{A} \mathcal{X} = \mathcal{B},
\eeq
where  $\mathcal{A}$ is a block matrix,  $\mathcal{A} \in \mathbb{C}^{N\totalrows \times N\totalrows}$ and $\mathcal{X}$ and $\mathcal{B}$ are vectors $\mathcal{X},\mathcal{B} \in \mathbb{C}^{N\totalrows} $. 
Here $\totalrows$ is the number of time steps, including those steps at the end where the solution is held constant.
Taking $\totalrows-\timesteps \propto \timesteps$ will give a success probability roughly corresponding to the square of the amplitude of the solution. 
If the solution has not decayed too much, then this success probability will give an acceptable overhead to the complexity.

The individual blocks in $\mathcal{A},\mathcal{X},\mathcal{B}$ can be given explicitly as
\begin{align}
\mathcal{B}_m &= \begin{cases}
\bm{x}(0), &  m=0 \\
\bm{v}_m, & 0<m\le \timesteps \\
0, & m > \timesteps
\end{cases} \\
\mathcal{X}_m &= \begin{cases}
\bm{x}(0), &  m=0 \\
\tilde{\bm{x}}(m\Delta t), & 0< m\le \timesteps \\
\tilde{\bm{x}}(\timesteps\Delta t), & m > \timesteps
\end{cases} \\
\mathcal{A}_{mn} &= \begin{cases}
\openone , & m=n\\
-V_n, & (m=n+1) \wedge (n\le \timesteps) \\
-\openone, & (m=n+1) \wedge (n>\timesteps) \\
0, & {\rm otherwise.}
\end{cases}\label{eq:Adef}
\end{align}
It is easily checked that this is a general form of the example given above.
By construction, this a lower bidiagonal matrix which will be be useful for computing its properties.

\subsubsection{Condition number}

The complexity of solving the system of linear equations is proportional to the condition number of $\mathcal{A}$.
To bound the condition number of $\mathcal{A}$, we just need to bound the norm of $\mathcal{A}$ and its inverse.
The norm of $\mathcal{A}$ is easily bounded as $\mathcal{O}(1)$, but the norm of the inverse will depend on how much the approximate solution $\tilde{\bm{x}}$ can grow as compared to the initial condition $\bm{x}(0)$ and driving $\bm{v}_m$.

This means that the condition number can be well-bounded provided the differential equations are stable.
That can be obtained provided the real parts of the eigenvalues of $A(t)$ are non-positive.
This was the approach applied in \cite{BerryCMP17}, where a diagonalisation was used so the complexity was also dependent on the condition number of the matrix that diagonalises $A$.
It has been pointed out in \cite{Krovi} that the complexity of this approach can be bounded in a number of other ways which do not depend on the diagonalisability of $A$.

As discussed in \cite{Krovi}, a useful way to describe the stability of the differential equation is in terms of the logarithmic norm of $A(t)$, which can be given as the maximum eigenvalue of $[A(t)+A^\dagger(t)]/2$.
A standard property of the logarithmic norm $\mu(A(t))$ is that
\begin{equation}
    \|W(t,t_0)\| \le \exp\left( \int_{t_0}^t \mu(A(t))\right) .
\end{equation}
This means that, if the logarithmic norm is non-positive, then $\|W(t,t_0)\| \le 1$.
That is, this approach can be used to bound the norm of the time-ordered exponential, in a similar way as the exponential is bounded in the time-independent case in \cite{Krovi}.
In the following we will require that the logarithmic norm is non-positive for stability, though other conditions bounding the norm of the time-ordered exponential can be used.

Starting with  $\mathcal{A}$ from Eq.~\eq{Adef}, it can be separated into a sum of the identity (the main diagonal) and a matrix which is just the $-V_n$ and $\openone$ on the offdiagonal.
Using the triangle inequality, the norm can therefore be upper bounded as
\beq
\norm{\mathcal{A}} \le  1+ \max_m \|V_m\|.
\eeq
Here $\max_m \|V_m\|$ is obtained because the spectral norm of the block-diagonal matrix can be given as the maximum of the spectral norms of the blocks.
The maximum is over the values of $m$ from $1$ to $M$.
Now, $V_m$ is defined in terms of $W_K$ which is an approximation of the exact evolution over time $\Delta t$.
We will require that the overall solution is given to accuracy $\epsilon<1$, so the norm of $W_K$ cannot deviate by more than $\epsilon$ from the norm for $W$, which is upper bounded by $1$ according to our stability requirement on the logarithmic norm.
That means $\max_m \|V_m\|=\mathcal{O}(1)$, and so $\|\mathcal{A}\| = \bigO{1}$.
An alternative bound that does not depend on the stability is
\begin{align}
\|V_m\| &=  \norm{ W_K(m\, \Delta t,(m-1)\Delta t) } \nn
&\le \norm{\sum_{k=0}^{K} \frac{(A_{\max}\Delta t)^k}{k!}} \nn
&< \norm{\exp(A_{\max}\Delta t)}.
\end{align}
We choose the number of steps such that $A_{\max}\Delta t=\bigO{1}$, which again gives the upper bound $\|\mathcal{A}\| = \bigO{1}$.

Next, we bound the norm of the inverse.
Since $\mathcal{A}$ is a lower bidiagonal matrix, it has the explicit form of the inverse~\cite{kilicc2013inverse}
\beq
\label{eq:inverA_def}
(\mathcal{A}^{-1})_{mn} = \begin{cases}
\openone , & m=n\\
\prod_{\ell=n}^{m-1} V_{\ell}, & (m>n) \wedge (m\le \timesteps+1) \wedge (n\le \timesteps) \\
\prod_{\ell=n}^{\timesteps} V_{\ell}, & (m>n) \wedge (m> \timesteps+1) \wedge (n\le \timesteps) \\
\openone, & (m>n) \wedge (n>\timesteps) \\
0, & {m<n.}
\end{cases}
\eeq
In our example for $\timesteps=3$, $\mathcal{A}^{-1}$ takes the form
\beq
\mathcal{A}^{-1} = \begin{bmatrix}
    \openone & 0 & 0 & 0 & 0 & 0 &  0 \\
    V_1 & \openone & 0 & 0 & 0 & 0 &  0 \\
    V_2 V_1 & V_2 & \openone & 0 & 0 & 0 & 0 \\
    V_3 V_2 V_1 & V_3 V_2 & V_3 & \openone & 0 & 0 & 0 \\
    V_3 V_2 V_1 & V_3 V_2 & V_3 & \openone & \openone & 0 & 0 \\
    V_3 V_2 V_1 & V_3 V_2 & V_3 & \openone & \openone & \openone & 0 \\
    V_3 V_2 V_1 & V_3 V_2 & V_3 & \openone & \openone & \openone & \openone 
\end{bmatrix} .
\eeq

Again we can use the triangle inequality.
We express $\mathcal{A}^{-1}$ as a sum of matrices, each of which contains one of the diagonals.
For each of the block diagonal matrices, the spectral norm is given by the maximum spectral norm of the blocks on the diagonal.
 More explicitly, we expand $\mathcal{A}^{-1}$ in the sum of block-diagonal matrices
\begin{equation}
    \mathcal{A}^{-1} = \sum_{k=0}^{\totalrows-1} \mathcal{A}^{\rm inv}_k
\end{equation}
where
\begin{equation}
    (\mathcal{A}^{\rm inv}_k)_{mn} = \begin{cases}
\prod_{\ell=n}^{\min(m-1,\timesteps)} V_{\ell}, & m=n+k  \\
0, & {m<n,}
\end{cases}
\end{equation}
and we take the convention that a product with no factors gives the identity.
We have by the triangle inequality
\beq
\| \mathcal{A}^{-1} \| \le \sum_{k=0}^{\totalrows-1} \| \mathcal{A}^{\rm inv}_k \|.
\eeq
Then
\begin{equation}
    \| \mathcal{A}^{\rm inv}_k \| \le \max_{n} \left\| \prod_{\ell=n}^{\min(n+k-1,\timesteps)} V_{\ell} \right\| ,
\end{equation}
so
\begin{align}\label{spectral_inverse}
\| \mathcal{A}^{-1} \| &\le \totalrows \max_{k} \| \mathcal{A}^{\rm inv}_k \| \nn
& \le \totalrows \max_{m\ge n} \left\| \prod_{\ell=n}^{m-1} V_{\ell} \right\|.
\end{align}

The bound in~\eqref{spectral_inverse} can be interpreted as $\totalrows$ times the norm of the largest block of $\mathcal{A}^{-1}$.
This will be an identity or some products of $V_\ell$s corresponding to an approximation of the time-evolution operator over a period of time within the interval $[t_0,T]$.
We will choose the parameters such that the approximation is within $\epsilon$ of the exact operator, and so the norm is within $\epsilon$ of the exact operator.
According to our condition on the stability in terms of the logarithmic norm, the time-ordered exponential has its spectral norm upper bounded by 1.
The product of $V_\ell$ gives the solution of the homogeneous equation approximated using a Dyson series, and we will choose the parameters such that the error in this solution is no more than $\epsilon$.
This means that the error in the spectral norm of the product is also no more than $\epsilon$, and is upper bounded by $1+\epsilon$.
Hence the spectral norm of the inverse is upper bounded as
\beq
\| \mathcal{A}^{-1} \| \le \totalrows (1 + \epsilon) = \bigO{\totalrows},
\eeq
where we have used the fact that we are considering complexity for small $\epsilon$.
Therefore the condition number of $\mathcal{A}$ is
\beq \label{eq:condno}
\kappa_{\mathcal{A}} = \| \mathcal{A} \| \times \| \mathcal{A}^{-1} \| = \bigO{\totalrows}.
\eeq

\subsection{Time-independent equations}

The encoding of the time-dependent case is exactly the same of the independent case, except that now the $V_m$ are independent of $m$, so we can write for example
\begin{equation}
\label{eq:examp4}
\mathcal{A} = \begin{bmatrix}
\openone & 0 & 0 & 0 & 0 & 0 &  0 \\
-V & \openone & 0 & 0 & 0 & 0 &  0 \\
0 & -V & \openone & 0 & 0 & 0 & 0 \\
0 & 0 & -V & \openone & 0 & 0 & 0 \\
0 & 0 & 0 & -\openone & \openone & 0 & 0 \\
0 & 0 & 0 & 0 & -\openone & \openone & 0 \\
0 & 0 & 0 & 0 & 0 & -\openone & \openone
\end{bmatrix},
\end{equation}
where now we have
\begin{equation}\label{eq:Tayser}
    V = W_K(m\Delta t,(m-1)\Delta t) = \sum_{k=0}^{K} \frac {(A\Delta t)^k}{k!} ,
\end{equation}
which is independent of $m$.
Similarly, the $\bm{v}_m$ are replaced with
\beq\label{eq:bser}
\bm{v} = \sum_{k=1}^{K} \frac{A^{k-1}\Delta t^k}{k!}\bm{b} \, .
\eeq
We can write the general form in a similar way as for the time-dependent case, except that the $V_m,\bm{v}_m$ are replaced with the time-independent $V,\bm{v}$.

In exactly the same way as for the time-dependent case, the spectral norm can be upper bounded by expressing the matrix as a sum of block-diagonal matrices and using the triangle inequality 
\begin{equation}
    \| \mathcal{A} \| \leq 1 + \|V\|.
\end{equation}
Again $\|V\|$ will be an accurate approximation of the exponential, which is upper bounded by 1 according to the stability criterion, and so $\|\mathcal{A}\|=\mathcal{O}(1)$.
Similarly to the time-dependent case, the spectral norm of $\mathcal{A}^{-1}$ can be upper bounded by using the triangle inequality on the explicit expression for the inverse.
Again the stability condition guarantees that the norm of the exponential is no larger than 1, and the condition on the approximation of the solution being with error $\epsilon$ implies that the norm of the powers of $V$ is within $\epsilon$ of 1.
Since there is a sum over $\totalrows$ of these norms, we again have $\mathcal{A}^{-1}=\bigO{\totalrows}$ and so
\beq
\kappa_{\mathcal{A}} = \| \mathcal{A} \| \times \| \mathcal{A}^{-1} \| = \bigO{\totalrows}.
\eeq

\section{Block encoding}
\label{sec:block}

\subsection{Time-dependent block encoding}

Here we address how to give a block encoding of the matrix in order to apply the quantum linear equation solver.
The block encoding is not trivial, because the blocks we have given above are composed of multiple integrals.
For our result we just assume that the matrix $A(t)$ is given by a block encoding, rather than considering how it would specifically be done for a particular encoding (such as sparse matrix oracles).
Similarly, we will assume that there is a block encoding for a preparation of the state corresponding to $\bm{b}(t)$ as well as the initial state $\bm{x}_0$.
We assume that the block encoding can be given the time register as a quantum input.
The overall complexity is then in terms of the number of calls to these block encodings.

More specifically, given the target system of dimension $N$, a time register of dimension $N_t$ and ancillas of dimension $N_A,N_b$, there are unitaries $U_A$, $U_b$, and $U_x$ such that
\begin{align}
    _A\!\bra{0} U_A \ket{0}_A\ket{n_t} &= \frac 1{\lambda_A} A(t) \ket{n_t},\\
    _b\!\bra{0} U_b \ket{0}_b\ket{n_t}\ket{0}_s &= \frac 1{\lambda_b} \ket{n_t} \ket{\bm{b}(t)}_s, \\
     U_x \ket{0}_s &= \frac 1{\lambda_x} \ket{\bm{x}_0}_s,
\end{align}
where $n_t$ is a binary encoding of the time $t$, the subscripts $A,b$ on the state indicate the ancillas for the block encodings of $A(t)$ and $\bm{b}(t)$, and the subscript $s$ indicates the target system.
The operator $A(t)$ acts upon the target system, which is omitted in the first line for simplicity.
The quantities $\lambda_A,\lambda_b,\lambda_x$ are the factors for the block encoding and are assumed to be known.
The value of $\lambda_x$ is just that needed to ensure normalisation.
The value of $\lambda_b$ also ensures normalisation, but since the norm of $\bm{b}(t)$ can vary over time we define the oracle in terms of a block encoding so that $\lambda_b$ can be taken to be independent of time.
It would also be possible to define an oracle with a unitary preparation and time-dependent $\lambda_b$, but that would make the later algorithms much more complicated.
We apply the standard encoding of the vector in amplitudes of the state so, for example,
\begin{equation}
    \ket{\bm{x}(t)} = \sum_{n=0}^{N-1} x_n(t) \ket{n},
\end{equation}
for computational basis states $\ket{n}$.
Note that this state is not defined in a normalised way, which is why we have, for example, division by $\lambda_x$ above.
We also use the standard assumption that the block-encoding unitaries can be applied in a controlled way and we can also apply their inverses.

There are standard methods for combining block encodings of matrices we will use.
When we are multiplying operations in a block encoding, the general procedure is that if matrices $A$ and $B$ are block encoded as
\beq
\bra{0} U \ket{0} = \frac{A}{\lambda_A}, \qquad \bra{0} V \ket{0} = \frac{B}{\lambda_B},
\eeq
then we have
\beq
\bra{0}\bra{0}VU \ket{0}\ket{0} = \frac{BA}{\lambda_A\lambda_B}.
\eeq
The understanding is that $U$ and $V$ are acting on different ancillas.
That is, we need the ancilla space for both when multiplying.
This means that if $A$ and $B$ are block encoded with respective $\lambda$-values of $\lambda_A$ and $\lambda_B$ then the $\lambda$-value for the product is $\lambda_A\lambda_B$.

In adding operations, we would build a controlled operation that performs
\beq
C = \ket{0}\bra{0} \otimes U + \ket{1}\bra{1} \otimes V
\eeq
controlled by a qubit ancilla.
If we aim to block encode $aA+bB$ (with positive $a,b$), then we can use a state preparation operation $P$ such that
\begin{equation}
    P \ket{0} = \frac 1{\sqrt{a\lambda_A+b\lambda_B}}\left(\sqrt{a\lambda_A} \ket{0} + \sqrt{b\lambda_B} \ket{1}\right) .
\end{equation}
Then we obtain
\beq
\bra{0}\bra{0} P^\dagger CP\ket{0}\ket{0} = \frac 1{a\lambda_A+b\lambda_B}(aA+bB).
\eeq
The first $P$ puts the qubit ancilla in a superposition, then $C$ applies either $U$ or $V$ controlled on that ancilla, and $P^\dagger$ inverts the preparation.
The resulting $\lambda$-value is then $a\lambda_A+b\lambda_B$, so we are just applying the same arithmetic to determine the $\lambda$-values as for the operators.
If we want negative weights in $a$ or $b$ then the sign can be applied using $C$, and the resulting $\lambda$-value is $|a|\lambda_A+|b|\lambda_B$.

These primitives mean that whenever we have a polynomial in something that is block encoded we can construct a block encoding for the polynomial.
The value of $\lambda$ is determined by using exactly the same arithmetic as in the polynomial for the operators, except we take the absolute values of any weights.
This reasoning also holds for discretised integrals.
The integral can be discretised into a sum, and the value of $\lambda$ for the discretised integral can be determined using the same arithmetic for the $\lambda$-values.
In particular, we use ancilla registers with times to give an approximation of the time integral in the Dyson series, in a similar way as was done for Hamiltonian simulation.

Here we want to block encode an operation of the following form, given for the example of three time steps
\beq
\mathcal{A}=
\begin{bmatrix}
    \openone & 0 & 0 & 0 & 0 & 0 &  0 \\
    -V_1 & \openone & 0 & 0 & 0 & 0 &  0 \\
    0 & -V_2 & \openone & 0 & 0 & 0 & 0 \\
    0 & 0 & -V_3 & \openone & 0 & 0 & 0 \\
    0 & 0 & 0 & -\openone & \openone & 0 & 0 \\
    0 & 0 & 0 & 0 & -\openone & \openone & 0 \\
    0 & 0 & 0 & 0 & 0 & -\openone & \openone
\end{bmatrix} .
\eeq
This matrix can be written in the form
\begin{align}
\label{eq:Asum}
\mathcal{A}&=
    \begin{bmatrix}
    \openone & 0 & 0 & 0 & 0 & 0 &  0 \\
    0 & \openone & 0 & 0 & 0 & 0 &  0 \\
    0 & 0 & \openone & 0 & 0 & 0 & 0 \\
    0 & 0 & 0 & \openone & 0 & 0 & 0 \\
    0 & 0 & 0 & 0 & \openone & 0 & 0 \\
    0 & 0 & 0 & 0 & 0 & \openone & 0 \\
    0 & 0 & 0 & 0 & 0 & 0 & \openone
\end{bmatrix}
-\begin{bmatrix}
0 & 0 & 0 & 0 & 0 & 0 & 0 \\
    \openone & 0 & 0 & 0 & 0 & 0 &  0 \\
    0 & \openone & 0 & 0 & 0 & 0 &  0 \\
    0 & 0 & \openone & 0 & 0 & 0 & 0 \\
    0 & 0 & 0 & \openone & 0 & 0 & 0 \\
    0 & 0 & 0 & 0 & \openone & 0 & 0 \\
    0 & 0 & 0 & 0 & 0 & \openone & 0
\end{bmatrix}
\begin{bmatrix}
    V_1 & 0 & 0 & 0 & 0 & 0 &  0 \\
    0 & V_2 & 0 & 0 & 0 & 0 & 0 \\
    0 & 0 & V_3 & 0 & 0 & 0 & 0 \\
    0 & 0 & 0 & \openone & 0 & 0 & 0 \\
    0 & 0 & 0 & 0 & \openone & 0 & 0 \\
    0 & 0 & 0 & 0 & 0 & \openone & 0 \\
    0 & 0 & 0 & 0 & 0 & 0 & \openone
\end{bmatrix}.
\end{align}
Here there are two operations that are trivial to implement.
One is the identity, and the other is an increment on the time register.
It is not unitary, because the top row is zero.
It can easily be block encoded by incrementing the register in a non-modular way, and using the carry qubit.
That is, recall that the block encoding involves a projection onto the $\ket{0}$ state, so if the carry qubit is flipped to $\ket{1}$ then that part is eliminated in the block encoding.

The difficult operation to block encode is the one with $V_m$ on the diagonal.
To block encode that matrix, we use the intermediate matrix
\beq
\mathcal{A}(\delta t) =
\begin{bmatrix}
A(\delta t) & 0 & 0 & 0 & 0 & 0 &  0 \\
0 & A(\Delta t +\delta t) & 0 & 0 & 0 & 0 & 0 \\
0 & 0 & A(2\Delta t +\delta t) & 0 & 0 & 0 & 0 \\
0 & 0 & 0 & 0 & 0 & 0 & 0 \\
0 & 0 & 0 & 0 & 0 & 0 & 0 \\
0 & 0 & 0 & 0 & 0 & 0 & 0 \\
0 & 0 & 0 & 0 & 0 & 0 & 0
\end{bmatrix},
\eeq
where $\delta t$ is used to index an offset between $0$ and $\Delta t$.
This matrix can be block encoded using the block encoding of $A$ with a time register as input.
This may be achieved simply by using a number of time intervals for each $\Delta t$ that is a power of 2.
Then we may use the qubits encoding $\delta t$, and the qubits encoding the line in the block matrix, together to give the time input for the block encoding of $A(t)$.
For the zeros in the lower-right of $A(\delta t)$ we can flip an ancilla qubit to eliminate that part in the block encoding.
That is most easily performed when $M$ is a power of 2 as well, so a single qubit will flag the lower-right part of the matrix.

Then we use this matrix as input to a truncated Dyson series of the form
\begin{align}
\label{eq:truncDys}
&\sum_{k=0}^{K} \int_{t_0}^{t_0+\Delta t} dt_1 \int_{t_0}^{t_1} dt_2 \cdots \int_{t_0}^{t_{k-1}} dt_k \, \mathcal{A}(t_1) \mathcal{A}(t_2) \cdots \mathcal{A}(t_k) .
\end{align}
In order to block encode the Dyson series, the same procedure as in \cite{Kieferova2019} or \cite{Low2018} can be used.
To be more specific on the complexity, there are $K$ time registers which are generated in equal superposition states and then sorted via a sorting network (as is needed for quantum algorithms).
The sort has an optimal complexity of $\bigO{K\log K}$ steps, each of which has a complexity corresponding to the number of bits used to store $\delta t$.
Using $\smalltime$ for the number of these time steps, the number of bits is $\log \smalltime$, and so the gate complexity is $\bigO{K\log K\log \smalltime}$.
The block encoding of each of the $K$ applications of $\mathcal{A}(\delta t)$ is used, which has complexity of $K$ calls to the block encoding of $A(t)$.

If the inequality testing approach of \cite{Low2018} is used then the complexity of the preparation of the time register is $\bigO{K\log M}$, because there are $K-1$ inequality tests on registers of size $\log M$.
Despite this, in practice it is preferable to use the sorting approach because it provides an improved constant factor in the complexity (which is ignored here in the order scaling).

The preparation of the register with values of $k$ can be performed simply using a unary representation and controlled rotations.
The precision of the truncated Dyson series needs to be $\bigO{\epsilon/\timesteps}$, and since there are $K$ rotations each needs precision $\bigO{\epsilon/(K\timesteps)}$, and therefore has complexity $\bigO{\log(K\timesteps/\epsilon)}$. That complexity is no larger than $\bigO{\log(K)\log(\timesteps/\epsilon)}$, so multiplying by $K$ for the $K$ rotations gives $\bigO{K\log(K)\log(\timesteps/\epsilon)}$.

It is also possible to perform the preparation of $k$ using the improved approach of \cite{SuPRXQuantum21} based on inequality testing.
That yields similar complexity for the order scaling because the number of bits needed is $\bigO{K\log K}$.
In the method of \cite{Low2018} for preparing time registers, the preparation over $k$ only needs an equal superposition, which may be prepared with trivial $\bigO{\log K}$ complexity \cite{Sanders2020}.

Choosing $\lambda_A\Delta t\le 1$ (and the sorting approach for time registers), the Dyson series in Eq.~\eq{truncDys} is block encoded with a $\lambda$-value no more than $e$.
In particular, in Ref.~\cite{Kieferova2019} the normalisation factor for the Dyson series sum with times prepared by a sort is given in Eq.~(55) of that work.
That is for a Hermitian Hamiltonian, but the analysis is unchanged for the general matrix.
In the notation of \cite{Kieferova2019}, $\lambda$ is equivalent to $\lambda_A$ here, and $T/r$ is equivalent to $\Delta t$ here.
That means the normalisation factor ($\lambda$-value) is no larger than $e^{\lambda_A\Delta t}\le e$ with $\lambda_A\Delta t\le 1$.

This result can also be seen using the principles described above for determining $\lambda$-values for block-encodings of polynomials.
Equation~\eq{truncDys} can be described by using weights of $1/k!$ then a time-ordering of the values of $t_j$.
Using the same arithmetic for the $\lambda$-values as for the operators, we obtain the value of $\lambda$
\begin{align}
&\sum_{k=0}^{K} \frac 1{k!} \int_{t_0}^{t_0+\Delta t} dt_1 \int_{t_0}^{t_0+\Delta t} dt_2 \cdots \int_{t_0}^{t_0+\Delta t} dt_k \, \lambda_{A}^k = \sum_{k=0}^{K} \frac {(\lambda_{A}\Delta t)^k}{k!} < e^{\lambda_A\Delta t}\le e .
\end{align}

There is a further constant factor in the block encoding of the sum for $\mathcal{A}$ in Eq.~\eq{Asum}, which may be ignored in our analysis because we are providing the order scaling.
That is, in Eq.~\eq{Asum} the matrix with $V_j$ is the Dyson series, which is block encoded with $\lambda$-value no larger than $e$.
The identity has $\lambda$-value of 1, as does the increment operator (the matrix with identities in the off-diagonal).
In the usual way for describing the block encoding of sums and products of operators, the overall $\lambda$-value is no larger than $e+1$.

The linear equations to solve are of the form $\mathcal{A} \mathcal{X} = \mathcal{B}$, or
\beq \label{eq:lineq}
\begin{bmatrix}
    \openone & 0 & 0 & 0 & 0 & 0 &  0 \\
    -V_1 & \openone & 0 & 0 & 0 & 0 &  0 \\
    0 & -V_2 & \openone & 0 & 0 & 0 & 0 \\
    0 & 0 & -V_3 & \openone & 0 & 0 & 0 \\
    0 & 0 & 0 & -\openone & \openone & 0 & 0 \\
    0 & 0 & 0 & 0 & -\openone & \openone & 0 \\
    0 & 0 & 0 & 0 & 0 & -\openone & \openone
\end{bmatrix}
\begin{bmatrix}
\bm{x}(0) \\ \bm{x}(\Delta t) \\ \bm{x}(2\Delta t) \\ \bm{x}(3\Delta t) \\ \bm{x}(3\Delta t) \\ \bm{x}(3\Delta t) \\ \bm{x}(3\Delta t)
\end{bmatrix}
=
\begin{bmatrix}
\bm{x}_0 \\ \bm{v}_1 \\ \bm{v}_2 \\ \bm{v}_3 \\ \bm{0} \\ \bm{0} \\ \bm{0}
\end{bmatrix}.
\eeq
We therefore need to consider the complexity of preparing the vector on the right $\mathcal{B}$, which is composed of $\bm{x}_0$ which we are given an oracle for, as well as $\bm{v}_j$ which needs to be constructed by block encoding integrals of $\bm{b}(t)$.
The integral to encode is of the form
\begin{align}
\label{eq:bDys}
\bm{v}_K(t_0+\Delta t,t_0)
&= \sum_{k=1}^{K} \int_{t_0}^{t_0+\Delta t} dt_1 \int_{t_0}^{t_1} dt_2 \cdots \int_{t_0}^{t_{k-1}} dt_{k}\, A(t_1) A(t_2) \cdots A(t_{k-1})\bm{b}(t_{k}) \, .
\end{align}
The block encoding of this vector can be applied in almost an identical way as for the block encoding of the truncated Dyson series, except the initial block encoding of $A(t_k)$ is replaced with the block encoding of the preparation of $\ket{\bm{b}(t_{k})}$.
Therefore, the gate complexity of preparing and sorting the time registers is again $\bigO{K\log K\log \smalltime}$, or $\bigO{K\log \smalltime}$ if one were simply to perform inequality testing as in \cite{Low2018}.
There are $K-1$ calls to block encodings of $A(t)$, as well as a single call to the block encoding of $\bm{b}(t_{k})$.

For the block encoding of Eq.~\eq{bDys}, the choice $\lambda_A\Delta t\le 1$ means that $\lambda$-value of the block encoding is no larger than $(e-1)\lambda_b \Delta t$.
As before, the $\lambda$-value can be determined using the same arithmetic, and we can use a factor of $1/k!$ followed by a time ordering of $t_j$.
That gives the $\lambda$ value as
\begin{align}
\sum_{k=1}^{K} \frac 1{k!} \int_{t_0}^{t_0+\Delta t} dt_1 \int_{t_0}^{t_0+\Delta t} dt_2 \cdots \int_{t_0}^{t_0+\Delta t} dt_{k} \, \lambda_A^{k-1} \lambda_b 
&= \frac{\lambda_b}{\lambda_A} \sum_{k=1}^{K} \frac{(\lambda_A\Delta t)^k}{k!}\nn
&< \frac{\lambda_b}{\lambda_A} (e^{\lambda_A\Delta t}-1)\nn 
&\le (e-1)\lambda_b \Delta t \, ,
\end{align}
for $\lambda_A\Delta t\le 1$.

For the complete preparation, it is necessary to prepare $\bm{x}(0)$ as well.
We would initially prepare a state on the time registers
\begin{align}
&\frac 1{\sqrt{\lambda_x^2 + (e-1)^2\timesteps\lambda_b^2\Delta t^2}}  \left[ \lambda_x \ket{0} + (e-1)\lambda_b\Delta t \sum_{m=1}^\timesteps \ket{m} \right].
\end{align}
Then a preparation of $\ket{\bm{x}(t_0)}$ controlled on $\ket 0$ and a preparation of $\ket{\bm{v}_m}$ controlled on $\ket{m}$ gives a state of the form
\begin{equation}
    \frac 1{\sqrt{\lambda_x^2 + (e-1)^2\timesteps\lambda_b^2\Delta t^2}} \left[ \ket{0}\ket{\bm{x}(t_0)} + \sum_{m=1}^\timesteps \ket{m} \ket{\bm{v}_m} \right].
\end{equation}
The amplitude for success is then
\begin{equation}\label{eq:amsucc}
    \sqrt{\frac{\lambda_x^2 + \sum_{m=1}^\timesteps \|\bm{v}_m\|^2}{\lambda_x^2 + (e-1)^2\timesteps\lambda_b^2\Delta t^2}} \ge \frac{\min_m \|\bm{v}_m\|}{(e-1)\lambda_b\Delta t} .
\end{equation}

The expression here is in terms of $\bm{v}_m$, which will be constructed with the truncated series and discretised integrals, but the parameters will be chosen so that the error in each is no more than $\epsilon x_{\max}/\timesteps$.
As a result
\begin{equation}
    \|\bm{v}_m\| \ge \|\bm{v}(m\Delta t,(m-1)\Delta t)\| - \epsilon x_{\max}/\timesteps \, .
\end{equation}
It would be expected that the correction here would typically be small enough to ignore.
If the norm of $\bm{b}$ does not vary too much and there are not cancellations in the integral, then the expression is proportional to $b_{\max}\Delta t$.
Since it would be expected that $\lambda_b$ is comparable to $b_{\max}$, in such a case the amplitude for success would be at least a constant.
For full generality we use the full expression rather than considering typical cases.

The preparation of $\mathcal{B}$ needs to be performed twice in the walk step used in the solution of the linear equations using the approach in Ref.~\cite{QLSP}.
That would lead to the square of this amplitude for success in the overall amplitude for this walk step.
To avoid this square factor, we can perform amplitude amplification on the state preparation to boost its amplitude to $\Theta(1)$.
The number of steps of amplitude amplification will be the inverse of this amplitude, or
\begin{equation}\label{eq:stateaa}
    \bigO{\frac{\lambda_b \Delta t}{\min_m\|\bm{v}(m\Delta t,(m-1)\Delta t)\| - \epsilon x_{\max}/\timesteps}} .
\end{equation}
Usually the amplitude will be unknown, but it can be determined at the beginning of the procedure with a logarithmic overhead.
That amplitude estimation is considerably less costly than the cost of solving the linear equations, which has an extra factor of the condition number $\kappa_{\mathcal{A}}$.

There is also the initial preparation needed for the time register for the state.
This can be performed as follows.
First perform a rotation on an ancilla qubit to obtain the appropriate weighting between $m=0$ and $m=1,\ldots,\timesteps$.
If $\timesteps$ is a power of 2, then controlled on this qubit being $\ket{0}$ we perform Hadamards on the $\log\timesteps$ qubits for the time, giving values from $0$ to $\timesteps-1$.
In the more general case where $\timesteps$ is not a power of 2, one can perform a controlled preparation over a number of basis states that is not a power of 2, which has complexity $\bigO{\log\timesteps}$ \cite{Sanders2020}.

We can perform CNOTs from the ancilla qubit to the remaining qubits for the time, so that if this control qubit was $\ket{1}$ then we have all ones.
Then adding 1 in a modular way with the ancilla as the most significant bit, if the ancilla was $\ket{1}$ then we get all zeros, and if it was $\ket{0}$ we get values from $1$ to $\timesteps$.
The complexity of this procedure is $\bigO{\log(1/\epsilon)}$ for the initial rotation, and $\bigO{\log\timesteps}$ for the remaining steps.
This complexity can be disregarded because it is smaller than many other parts of the procedure.

\subsection{Time-independent block encoding}
The time-independent case is substantially simplified over the time-dependent case.  First, the block encodings are simplified to
\begin{align}
    _A\!\bra{0} U_A \ket{0}_A &= \frac 1{\lambda_A} ,\\
    U_b \ket{0} &= \frac 1{\lambda_b} \ket{\bm{b}}, \\
     U_x \ket{0} &= \frac 1{\lambda_x} \ket{\bm{x}_0},
\end{align}
where there is now no time register for the input and the preparation of $\ket{\bm{b}}$ is just via a unitary operation.
We now omit the $s$ subscript because there is just the one register.
This means that now $\lambda_b=\| \bm{b} \|$.
Again the block encoding can use Eq.~\eq{Asum}, except with $V_m$ replaced with the Taylor series $V$ as given in Eq.~\eq{Tayser}, and $\bm{v}_m$ replaced with $\bm{v}$ given in Eq.~\eq{bser}.

The block encoding proceeds in exactly the same way as before, except that there is no need for the time integrals.
This means that we need to prepare a register with $k$ for both the Taylor series and the series for $\bm{v}$.
Again, if $\lambda_A \Delta t\le 1$, then the Taylor series is block encoded with a factor of at least $1/e$.

We can also circumvent the problem we have in the time-dependent case that there are possible cancellations in $\bm{v}_m$.
That is, we have
\begin{align}
    \| \bm{v} \| &\ge \| \bm{b} \| \Delta t \left\{ 1 - \sum_{k=2}^K \frac{(\|A\|\Delta t)^{k-1}}{k!}\right\}\nn
    &> \| \bm{b} \| \Delta t \left\{ 1 - \sum_{k=2}^\infty \frac{(\|A\|\Delta t)^{k-1}}{k!} \right\} \nn
    &= \| \bm{b} \| \Delta t \left\{ 1 - \frac{\left[ \exp(\|A\|\Delta t) - (1+\|A\|\Delta t)\right]}{\|A\|\Delta t} \right\} \nn
    &\ge (3-e) \| \bm{b} \| \Delta t \, ,
\end{align}
where we have used $\|A\|\Delta t \le \lambda_A \Delta t \le 1$.
We would then find that the amplitude for success of the state preparation would, using Eq.~\eq{amsucc}, be given by
\begin{equation}
    \sqrt{\frac{\lambda_x^2 + \sum_{m=1}^\timesteps \|\bm{v}\|^2}{\lambda_x^2 + (e-1)^2\timesteps\lambda_b^2\Delta t^2}} \ge \frac{(3-e) \| \bm{b} \| \Delta t}{(e-1)\lambda_b\Delta t} = \frac{3-e}{e-1} \, .
\end{equation}
Thus the amplitude for success of the state preparation is at least a constant here, so cannot affect the scaling of the complexity and can be omitted in the $\mathcal{O}$ expressions.

\section{Complexity of solution}
\label{sec:final}

\subsection{Time-dependent complexity}

Before giving the overall complexity, we give further discussion of the choice of $K$.
As explained above, the error due to the truncation is given in Eq.~\eqref{eq:errorbnd}, and the error in the first term (due to the homogeneous component of the solution) can be appropriately bounded by choosing $K$ as in Eq.~\eqref{eq:choiceK}.

The complete expression for the error in Eq.~\eqref{eq:errorbnd} accounts for the inhomogeneity in the second term.
The expression for the inhomogeneous error can also be bounded with $\lambda_A$ replacing $A_{\max}$ because $\lambda_A\ge A_{\max}$, which gives this term as
\begin{equation}
    \bigO{ \frac{\lambda_A^K \Delta t^{K+1} b_{\max}}{(K+1)!} } .
\end{equation}
For our algorithm we will be choosing $\Delta t$ such that $\lambda_A \Delta t\le 1$, which means that this error is bounded as
\begin{equation}
    \bigO{ \frac{b_{\max}}{(K+1)!\lambda_A} } .
\end{equation}
Then we can choose $K$ to make this error no larger than $\epsilon x_{\max}/r$ by taking
\begin{equation}
    K = \bigO{\frac{\log(r b_{\max}/(\lambda_A x_{\max}\epsilon))}{\log\log(r b_{\max}/(\lambda_A x_{\max}\epsilon))}} .
\end{equation}
With $\Delta t\approx 1/\lambda_A$, we have $r=\bigO{\lambda_A T}$, which gives
\begin{equation}
    K = \bigO{\frac{\log(T b_{\max}/(x_{\max}\epsilon))}{\log\log(T b_{\max}/(x_{\max}\epsilon))}} .
\end{equation}
Using $r=\bigO{\lambda_A T}$ in Eq.~\eqref{eq:choiceK} for the choice of $K$ for the homogeneous error gives
\begin{equation}
    K = \bigO{\frac{\log(\lambda_A T/\epsilon)}{\log\log(\lambda_A T/\epsilon)}} .
\end{equation}
Since we need to ensure that both error terms in Eq.~\eqref{eq:errorbnd} are appropriately bounded, we should take the maximum of these two scalings of $K$, which can be expressed as
\begin{equation}\label{eq:chooseK2}
    K = \bigO{\frac{\log(\lambda_{Ax} T/\epsilon)}{\log\log(\lambda_{Ax} T/\epsilon)}} ,
\end{equation}
where
\begin{equation}
    \lambda_{Ax} = \max\left( \lambda_A,{b_{\max}}/{x_{\max}} \right) \, .
\end{equation}

We will be measuring the error between pure states via the 2-norm distance, but also need to account for cases where the approximate state generated is not pure.
We therefore use the Bures-Wasserstein metric \cite{BHATIA2019165}
\begin{equation}
    d_{BW}(\rho,\sigma) = \sqrt{{\rm Tr}(\rho) + {\rm Tr}(\sigma) - 2 {\rm Tr}\sqrt{\sqrt{\rho}\sigma\sqrt{\rho}}} \, .
\end{equation}
This measure is for states that are not normalised, which is why it includes the traces.
In the case that they are normalised this measure is the Bures distance.
For the case where the two states are pure, so $\rho=\ket{\psi}\bra{\psi}$ and $\sigma=\ket{\phi}\bra{\phi}$, then provided $\braket{\psi}{\phi}$ is real (and positive) we obtain
\begin{equation}
    d_{BW}(\rho,\sigma) = \| \ket{\psi} - \ket{\phi} \| \, .
\end{equation}
Therefore this measure can be regarded as a generalisation to mixed states of the 2-norm distance.
Another feature of this measure is that $d_{BW}^2$ is convex in $\rho$.
This property follows from concavity of the fidelity and square root.

The overall complexity of the quantum algorithm for the time-dependent differential equations can be described as in the following theorem.

\begin{theorem}
We are given an ordinary linear differential equation of the form
\begin{equation}
\bm{\dot x}(t) = A(t)\bm{x}(t)+\bm{b}(t), \qquad \bm{x}(0) = \bm{x}_0,
\end{equation}
where $\bm{b}(t) \in \mathbb{C}^{N}$ is a vector function of $t$, $A(t)\in \mathbb{C}^{N\times N}$ is a coefficient matrix with non-positive logarithmic norm, and 
$\bm{x}(t)\in \mathbb{C}^{N}$ is the solution vector as a function of $t$.
The parameters of the differential equation are provided via unitaries $U_A$, $U_b$, and $U_x$ with known $\lambda_A,\lambda_b,\lambda_x$ such that
\begin{align}
    _A\!\bra{0} U_A \ket{0}_A\ket{n_t} &= \frac 1{\lambda_A} A(t) \ket{n_t},\\
    _b\!\bra{0} U_b \ket{0}_b\ket{n_t}\ket{0}_s &= \frac 1{\lambda_b} \ket{n_t} \ket{\bm{b}(t)}_s, \\
     U_x \ket{0}_s &= \frac 1{\lambda_x} \ket{\bm{x}_0}_s.
\end{align}
A quantum algorithm can provide an approximation $\hat{\bm{x}}$ of the solution $\ket{\bm{x}(T)}$ satisfying $d_{BW}\left(\hat{\bm{x}},\ket{\bm{x}(T)}\bra{\bm{x}(T)}\right) \le \epsilon x_{\max}$ using an average number
\begin{align}
    &\bigO{\solrat \lambda_{A} T \log\left( 1/\epsilon \right)} && {\rm calls~to~}U_b,U_x,\\
    &\bigO{\solrat \lambda_{A} T \log\left( 1/\epsilon \right)\log\left(\frac{\lambda_{Ax} T}\epsilon\right)} && {\rm calls~to~}U_A,\\
    &\mathcal{O}\left(\solrat \lambda_{A} T \log\left( 1/\epsilon \right)\log\left(\frac{\lambda_{Ax} T}\epsilon\right)\left[ \log \left( \frac{T \dervbnd}{\lambda_{A}\epsilon}\right) + \log\left(\frac{\lambda_{A} T}\epsilon\right) \right] \right) \quad && {\rm additional~gates},
\end{align}
where $\lambda_{Ax} = \max\left(\lambda_A,{b_{\max}}/{x_{\max}}\right)$, given constants satisfying
\begin{align}
    \solrat &\ge \frac{x_{\max}}{\left\| \bm{x}(T) \right\|}  \frac{\lambda_b/\lambda_{A}}{\min_{m} \|\bm{v}(m\Delta t,(m-1)\Delta t)\|-\epsilon x_{\max}/(\lambda_{A} T)},\\
    \dervbnd &\ge \left( 1+ \frac{Tb_{\max}}{\lambda_A x_{\max}} \right)\max_{t\in[0,T]} \| A'(t) \|+ \frac{\max_{t\in[0,T]} \| \bm{b}'(t) \|}{x_{\max}},\\
    x_{\max} &\ge \max_{t\in[0,T]}\left\| \bm{x}(t) \right\|, \\
    b_{\max} &\ge \max_{t\in[0,T]}\left\| \bm{b}(t) \right\|.
\end{align}
Here
\begin{align}
\bm{v}(t,t_0) &\coloneqq \sum_{k=1}^{\infty} \int_{t_0}^t dt_1 \int_{t_0}^{t_1} dt_2 \cdots \int_{t_0}^{t_{k-1}} dt_{k}\, A(t_1) A(t_2) \cdots A(t_{k-1})\bm{b}(t_{k}) \, .
\end{align}
and
\begin{equation}
   \Delta t = \frac{T} {\left\lceil \lambda_A T\right\rceil} .
\end{equation}
\end{theorem}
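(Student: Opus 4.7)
The plan is to assemble the linear system $\mathcal{A}\mathcal{X}=\mathcal{B}$ from Section~\ref{sec:cond} with $\timesteps = \lceil \lambda_A T\rceil$ active time steps and $\totalrows - \timesteps \propto \timesteps$ trailing identity rows, then feed it to the optimal quantum linear equation solver (QLSS) of \cite{QLSP}. From Eq.~\eq{condno} the condition number is $\kappa_{\mathcal{A}} = \bigO{\totalrows} = \bigO{\lambda_A T}$, so the QLSS applies $\bigO{\kappa_{\mathcal{A}}\log(1/\epsilon)} = \bigO{\lambda_A T \log(1/\epsilon)}$ calls to the block encoding of $\mathcal{A}$ and the preparation of $\mathcal{B}$, giving precision $\epsilon$ on the normalised solution vector. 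Reading out the final register (or using the trailing rows to boost the marginal probability) then extracts $\ket{\bm{x}(T)}$; this explains the leading $\lambda_A T \log(1/\epsilon)$ in all three bounds.

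Next I would cost the two primitives. For the block encoding of $\mathcal{A}$ I would follow Eq.~\eq{Asum}: the identity and the shift are free, while the block-diagonal part is the truncated Dyson series \eq{truncDys} with $K$ as in Eq.~\eq{chooseK2}. By the analysis in Section~\ref{sec:block} this uses $K$ controlled calls to $U_A$, $\bigO{K \log K}$ sorting operations on $\log \smalltime$-bit time registers, and preparation of a unary register of size $K$; the resulting $\lambda$-value is $\bigO{1}$ because $\lambda_A\Delta t\le 1$. Multiplying by the QLSS count gives $\bigO{\lambda_A T \log(1/\epsilon)\cdot K}$ calls to $U_A$, which matches the middle line. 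For the preparation of $\mathcal{B}$, I would use Eq.~\eq{bDys} for each $\bm{v}_m$, whose block encoding uses $K-1$ copies of $U_A$ and a single $U_b$, together with the $U_x$ call for the first block. Amplitude amplification on the state preparation boosts its success amplitude from the value in Eq.~\eq{amsucc} to $\Theta(1)$ with overhead $\solrat$ (up to the factor $x_{\max}/\|\bm{x}(T)\|$ absorbed from the post-selection on the trailing rows), producing the $\solrat \lambda_A T \log(1/\epsilon)$ counts of $U_b$ and $U_x$ and an extra factor of $\solrat$ on the $U_A$ line.

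The main subtlety is the error budget. Three sources contribute to $d_{BW}\le \epsilon x_{\max}$: (i) truncation of the Dyson/Taylor series in $V_m$ and $\bm{v}_m$, which by Eq.~\eqref{eq:errorbnd} is controlled by the choice of $K$ in Eq.~\eq{chooseK2}; (ii) discretisation of the $k$-fold time integrals into sums over $\smalltime$ sample points, where the error per step is $\bigO{\Delta t \cdot \dervbnd /\smalltime}$ after invoking the mean-value theorem on the integrand (this is where $\dervbnd$, the bound on $\|A'\|$ and $\|\bm{b}'\|/x_{\max}$ adjusted by the relative size of $\bm{b}$, enters); (iii) the QLSS precision. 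Demanding each contribution be $\bigO{\epsilon x_{\max}/\timesteps}$ forces $\smalltime = \bigO{T\dervbnd/(\lambda_A \epsilon)}$, hence $\log \smalltime = \bigO{\log(T\dervbnd/(\lambda_A\epsilon))}$, and the QLSS precision $\epsilon/\kappa_{\mathcal{A}} = \bigO{\epsilon/(\lambda_A T)}$, giving $\log(\lambda_A T/\epsilon)$ inside the controlled-rotation precision. Passing from 2-norm distance on pure states to $d_{BW}$ uses the pure-state identification and convexity of $d_{BW}^2$, so the mixed amplified output from amplitude amplification remains within the required bound.

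Finally I would tally the non-oracle gates. Each block encoding of $\mathcal{A}$ or of a $\bm{v}_m$ contributes $\bigO{K\log K \log \smalltime}$ gates for sorting the time registers and $\bigO{K \log K \log(\timesteps/\epsilon)}$ for the $k$-register preparation (Section~\ref{sec:block}), which simplifies to the bracketed $[\log(T\dervbnd/(\lambda_A\epsilon)) + \log(\lambda_A T/\epsilon)]$ factor, and this gets multiplied by the $\solrat \lambda_A T \log(1/\epsilon) \log(\lambda_{Ax}T/\epsilon)$ invocations, reproducing the third line. The hard parts of this program are the discretisation-error analysis in step~(ii) — tracking how $\smalltime$ must scale with the derivative bound while the integrand is itself a product of $k$ time-varying operators — and converting the post-selection and amplitude-amplification output to the Bures-Wasserstein bound without losing a factor that would spoil the logarithmic dependence on $\epsilon$.
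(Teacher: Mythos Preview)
Your proposal is correct and follows essentially the same route as the paper's proof: assemble the block linear system, bound $\kappa_{\mathcal{A}}=\mathcal{O}(\lambda_A T)$, invoke the optimal QLSS, block-encode $\mathcal{A}$ and $\mathcal{B}$ via the truncated Dyson series, budget the truncation/discretisation/solver errors, and amplitude-amplify both the $\mathcal{B}$-preparation and the final-time extraction to produce the two factors in $\solrat$. One small correction: the paper shows (via the convexity-of-$d_{BW}^2$ averaging over the $r$ trailing rows) that $\epsilon_{\rm QLSP}\propto\epsilon$ already suffices rather than $\epsilon/\kappa_{\mathcal{A}}$, and the $\log(\lambda_A T/\epsilon)$ term in the bracket instead arises from the per-step precision $\epsilon/\timesteps$ required in the $k$-register rotations of the Dyson block encoding.
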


The condition that the logarithmic norm is non-positive ensures that the differential equation is dissipative.
We use the convention that state vectors such as $\ket{\bm{x}_0}_s$ are the unnormalised form with amplitudes exactly equal to the coefficients of the corresponding complex vector.
For the unnormalised solution state $\ket{\hat{\bm{x}}}$, this is defined in the sense that for the normalised quantum state, there exists some constant that we can multiply it by to give $\ket{\hat{\bm{x}}}$ that satisfies the accuracy constraint.

We require that we are given upper bounds $\solrat$, $\dervbnd$, and so on, because it may be too difficult to determine the maxima and minima required exactly.
For completeness we have given a complicated expression for $\solrat$ to account for the steps of amplitude amplification needed, but in practice if the solution does not decay significantly, and $\bm{b}(t)$ does not vary such that it cancels and gives small $\bm{v}$, then it can be expected that $\solrat=\bigO{1}$ and can be ignored in the scaling of the complexity.

For the count of additional gates, we are not allowing arbitrary precision rotations.
Instead it would be for a fixed set of gates, such as Toffolis (or $T$ gates) plus Clifford gates.
The complexity would be equivalent (up to a constant factor) to the non-Clifford count that is often used in quantifying complexities for algorithms intended for error-corrected quantum computers with magic-state factories.

\begin{proof}
The principle we use for the solution is to express the solution in terms of a Dyson series over many time steps encoded in a system of linear equations as shown in Eq.~\eqref{eq:lineq}.
We then use the linear equation solver of Ref.~\cite{QLSP} which has complexity $\mathcal{O}\left(\kappa_{\mathcal{A}}\log(1/\epsilon)\right)$, in terms of calls to the block-encoded matrix $\mathcal{A}$ and vector $\mathcal{B}$.
We then need to perform amplitude amplification to obtain the solution at the final time from the vector with the solution over all times.
In order to determine the complexity, we therefore need to determine $\kappa_{\mathcal{A}}$, the complexity of block encoding $\mathcal{A},\mathcal{B}$, and the amplitude of the component of the solution at the final time.
We will show that the factor of $\solrat$ comes from the complexity of that amplitude amplification, as well as amplitude amplification as part of preparing $\mathcal{B}$.
Moreover, we will show that $\lambda_A T$ comes from the factor of $\kappa_{\mathcal{A}}$,
and $\log\left({\lambda_{Ax} T}/\epsilon\right)$ comes from the order $K$ used in the Dyson series for block encoding $\mathcal{A},\mathcal{B}$.

The simplest part is the condition number.
The condition number is $\kappa_{\mathcal{A}} = \bigO{\totalrows}$ as in Eq.~\eqref{eq:condno}.
Here, $\totalrows$ is the total number of time steps, including those at the end where the solution is held fixed at the final time.
This is the standard approach to ensure that there is a reasonable amplitude for the solution at the final time.
We use an equal number of time steps for the time evolution $r$ as those where the solution is held fixed, so that $\totalrows=2r$ and $\kappa_{\mathcal{A}} = \bigO{r}$.
Because $r=T/\Delta t$, to minimise the complexity we should choose $\Delta t$ as large as possible.
The restriction limiting how large $\Delta t$ can be is that $\lambda_A \Delta t\le 1$, which we use to ensure that the $\lambda$-value for the block encoding of $\mathcal{A}$ is $\bigO{1}$.
That means we should take $\Delta t\approx 1/\lambda_A$ so $r=\lambda_A T$, and therefore $\kappa_{\mathcal{A}} = \bigO{\lambda_A T}$ as claimed.

The complexity of solving the system of linear equations is $\bigO{\lambda_A T\log(1/\epsilon_{\rm QLSP})}$ in terms of calls to block encodings of $\mathcal{A},\mathcal{B}$, where $\epsilon_{\rm QLSP}$ is the allowable error in the solution of the system of linear equations.
For the complexity we need to relate that error to the allowable error in the solution.

To achieve that, let us denote the approximate solution obtained at time $t$ by $\ket{\tilde{\bm{x}}(t)}$, to be compared with the correct solution of the linear equations $\ket{\bm{x}(t)}$.
Because the solution of the linear equations is approximate, the components of the approximate solution vector $\widetilde{\mathcal{X}}$ need not be equal in the second half as they are for the exact solution.
We are using $\ket{\tilde{\bm{x}}(t)}$ for the components of $\widetilde{\mathcal{X}}$, so to account for the unequal components in the notation, we use $\ket{\tilde{\bm{x}}(t)}$ with values of $t>T$, and similarly for $\bm{x}$.

When the solution of the linear equations is accurate to within error $\epsilon_{\rm QLSP}$, it means that
\begin{align}
\sqrt{\sum_{j=r+1}^\totalrows \left\| \ket{\tilde{\bm{x}}(j \Delta t)} - \ket{\bm{x}(T)} \right\|^2}
& \le
\sqrt{\sum_{j=0}^\totalrows \left\| \ket{\tilde{\bm{x}}(j \Delta t)} - \ket{\bm{x}(j \Delta t)} \right\|^2} \nn
&\le \epsilon_{\rm QLSP} \sqrt{\sum_{j=0}^\totalrows \left\| \ket{\bm{x}(j \Delta t)} \right\|^2}\nn 
&\le \epsilon_{\rm QLSP} \, x_{\max} \sqrt{\totalrows+1} .
\end{align}
Therefore, the approximate solution $\hat{\bm{x}}$ will satisfy
\begin{align}
d_{BW}\left(\hat{\bm{x}},\ket{\bm{x}(T)}\bra{\bm{x}(T)}\right) 
&\le \sqrt{\frac 1{r} \sum_{j=r+1}^\totalrows d_{BW}^2\left( \ket{\tilde{\bm{x}}(j \Delta t)}\bra{\tilde{\bm{x}}(j \Delta t)} , \ket{\bm{x}(T)}\bra{\bm{x}(T)} \right)} \nn
&= \frac 1{\sqrt{r}} \sqrt{\sum_{j=r+1}^\totalrows \left\| \ket{\tilde{\bm{x}}(j \Delta t)} - \ket{\bm{x}(T)} \right\|^2} \nn
&\le \frac 1{\sqrt{r}} \epsilon_{\rm QLSP} \, x_{\max} \sqrt{\totalrows+1} .
\end{align}
Because we take $\totalrows=2r$ we can take $\epsilon_{\rm QLSP}\propto \epsilon$ and the desired precision will be obtained.
Note that there will be error in other steps in the algorithm so the error in solving the linear equations will be taken to be some fraction of the total allowable error $\epsilon$.
As usual in analysis of this type, the constant factor can be omitted when giving complexity scalings.
Therefore the complexity of solving the linear equations can be given as $\bigO{\lambda_A T\log(1/\epsilon)}$ in terms of calls to block encodings of $\mathcal{A},\mathcal{B}$.

The complexity of block encoding $\mathcal{A}$ can be determined from the complexity of block encoding the Dyson series in Eq.~\eqref{eq:truncDys}, with other costs being trivial.
That Dyson series uses calls to the oracles $U_A$, but not $U_b,U_x$.
Because the Dyson series is truncated at order $K$, it can be block encoded with $K$ calls to $U_A$, with $K$ as given in Eq.~\eqref{eq:chooseK2}.
That gives the factor of $\log\left({\lambda_{Ax} T}/\epsilon\right)$ in the complexity for calls to $U_A$.

The complexity of block encoding $\mathcal{B}$ is similar, except we also need to account for calls to $U_b,U_x$, and we also need to account for the complexity of amplitude amplification.
The form of the Dyson series used to prepare $\mathcal{B}$ is given in Eq.~\eqref{eq:bDys}.
The block encoding of this Dyson series uses $K-1$ calls to $U_A$, and a single call to $U_b$.
Moreover, a single call to $U_x$ is used to prepare the component of $\mathcal{B}$ with the initial state.
Therefore we have the factor of $K$ for calls to $U_A$, but \emph{not} to $U_b,U_x$.

Next, the number of steps of amplitude amplification for preparing the state for $\mathcal{B}$ is given in Eq.~\eqref{eq:stateaa}.
Using $\Delta t \propto 1/\lambda_A$ gives the number of steps as
\begin{equation}\label{eq:secondfac}
    \bigO{\frac{\lambda_b /\lambda_A}{\min_m\|\bm{v}(m\Delta t,(m-1)\Delta t)\| - \epsilon x_{\max}/(\lambda_A T)}} .
\end{equation}
That is the second factor in the expression for $\solrat$.

Next we consider the number of steps of amplitude amplification needed to obtain the solution at the final time.
This amplitude amplification is performed on the state obtained from the linear equation solver.
The amplitude of the state at the final time is given by
\begin{align}
    \sqrt{\frac{\sum_{j=r+1}^\totalrows \| \ket{\tilde{\bm{x}}(j \Delta t)} \|^2}{\sum_{j=0}^\totalrows \| \ket{\tilde{\bm{x}}(j \Delta t)} \|^2}} ,
\end{align}
which implies an average number of steps of amplitude amplification
\begin{align}\label{eq:finaasteps}
    \bigO{\sqrt{\frac{\sum_{j=0}^\totalrows \| \ket{\tilde{\bm{x}}(j \Delta t)} \|^2}{\sum_{j=r+1}^\totalrows \| \ket{\tilde{\bm{x}}(j \Delta t)} \|^2}}} .
\end{align}
For the numerator we obtain
\begin{align}
\sum_{j=0}^\totalrows \| \ket{\tilde{\bm{x}}(j \Delta t)} \|^2 
&\le \sum_{j=0}^\totalrows \left(\| \ket{{\bm{x}}(j \Delta t)} \|
+ \| \ket{\tilde{\bm{x}}(j \Delta t)}-\ket{{\bm{x}}(j \Delta t)} \|\right)^2 \nn
&\le (\totalrows+1) x_{\max}^2 + 2x_{\max}\sum_{j=0}^\totalrows \| \ket{\tilde{\bm{x}}(j \Delta t)}-\ket{{\bm{x}}(j \Delta t)} \|\nn
&\quad+ \sum_{j=0}^\totalrows \| \ket{\tilde{\bm{x}}(j \Delta t)}-\ket{{\bm{x}}(j \Delta t)} \|^2 \nn
&\le (\totalrows+1) x_{\max}^2 + 2x_{\max}\sqrt{\totalrows+1}\nn 
&\quad \times \sqrt{\sum_{j=0}^\totalrows \| \ket{\tilde{\bm{x}}(j \Delta t)}-\ket{{\bm{x}}(j \Delta t)} \|^2} + \epsilon_{\rm QLSP}^2 \, x_{\max}^2 (\totalrows+1) \nn
&\le (\totalrows+1) x_{\max}^2 + 2x_{\max}\epsilon_{\rm QLSP} \, x_{\max} (\totalrows+1)+ \epsilon_{\rm QLSP}^2 \, x_{\max}^2 (\totalrows+1) \nn
&= (\totalrows+1) x_{\max}^2 ( 1 + \epsilon_{\rm QLSP} )^2 \nn
&= \bigO{r x_{\max}^2}.
\end{align}
For the denominator of Eq.~\eqref{eq:finaasteps} we similarly obtain
\begin{align}
{\sum_{j=r+1}^\totalrows \| \ket{\tilde{\bm{x}}(j \Delta t)} \|^2} + {\sum_{j=r+1}^\totalrows \| \ket{\tilde{\bm{x}}(j \Delta t)}-\ket{{\bm{x}}(j \Delta t)} \|^2} 
&\ge {\sum_{j=r+1}^\totalrows \| \ket{{\bm{x}}(j \Delta t)} \|^2} \nn
\sqrt{\sum_{j=r+1}^\totalrows \| \ket{\tilde{\bm{x}}(j \Delta t)} \|^2} + \sqrt{\sum_{j=r+1}^\totalrows \| \ket{\tilde{\bm{x}}(j \Delta t)}-\ket{{\bm{x}}(j \Delta t)} \|^2} 
&\ge \sqrt{\sum_{j=r+1}^\totalrows \| \ket{{\bm{x}}(j \Delta t)} \|^2} \nn
\sqrt{\sum_{j=r+1}^\totalrows \| \ket{\tilde{\bm{x}}(j \Delta t)} \|^2} + \sqrt{\totalrows+1}\epsilon_{\rm QLSP} x_{\max} &\ge \sqrt r \|{\bm{x}}(T)\| \nn
\sqrt{\sum_{j=r+1}^\totalrows \| \ket{\tilde{\bm{x}}(j \Delta t)} \|^2}
    &\ge \sqrt{r} \left( \|{\bm{x}}(T)\| - \epsilon x_{\max}/2 \right) ,
\end{align}
where in the last line we have used $\epsilon_{\rm QLSP}$ as suitable fraction of $\epsilon$.
Now, provided $\|{\bm{x}}(T)\| \ge \epsilon x_{\max}$ we obtain
\begin{equation}
    \sum_{j=r+1}^\totalrows \| \ket{\tilde{\bm{x}}(j \Delta t)} \|^2 \ge \frac r4 \|{\bm{x}}(T)\|^2 \, .
\end{equation}
If $\|{\bm{x}}(T)\| < \epsilon x_{\max}$ then this is a pathological case where the size of the solution is smaller than the precision required in the algorithm.
For any output state we can give $\hat{\bm{x}}=0$ and the solution will have the desired precision.
Therefore, excluding that pathological case we obtain
\begin{align}
\sqrt{\frac{\sum_{j=0}^\totalrows \| \ket{\tilde{\bm{x}}(j \Delta t)} \|^2}{\sum_{j=r+1}^\totalrows \| \ket{\tilde{\bm{x}}(j \Delta t)} \|^2}} &=\bigO{\frac{\sqrt{r x_{\max}^2}}{\sqrt{r\|{\bm{x}}(T)\|^2}}}\nn 
&= \bigO{\frac{x_{\max}}{\|{\bm{x}}(T)\|}}.
\end{align}
The overall factor in the complexity required is this number of steps of amplitude amplification for obtaining the final time, multiplied by the number of steps of amplitude amplification for preparing $\mathcal{B}$ given in Eq.~\eqref{eq:secondfac}.
That product is the lower bound for $\solrat$ given in the statement of the theorem.

A minor subtlety in the use of amplitude amplification on the result of the quantum linear equations algorithm is that the algorithm as given in Ref.~\cite{QLSP} uses measurements to ensure the correct result of filtering, which would not be compatible with amplitude amplification.
However, it is trivial to just use omit that measurement, and perform the amplitude amplification jointly on the successful result of the filtering and obtaining the final time in the solution state.
That introduces no further factors in the complexity.

Hence, we have proven the factor $\solrat$ in the complexities for $U_A$ and $U_b,U_x$, with the factor $\lambda_A T$ arising from the factor of $\kappa_{\mathcal{A}}$ in the complexity for solving linear equations, $\log(1/\epsilon)$ coming from the factor in the complexity of solving linear equations, and the factor of $\log\left({\lambda_{\color{blue}Ax} T}/\epsilon\right)$ for the $U_A$ complexity coming from the order $K$ used in the Dyson series.
It just remains to account for the number of additional gates.

The majority of the complexity for the additional gates is in constructing the registers for the time.
We need to construct $K$ registers for the time, and sort in order to obtain the correctly ordered time registers.
 Similarly to the case for Hamiltonian simulation in Ref.~\cite{Kieferova2019}, the complexity of these steps is then $\bigO{K\log K \log\smalltime}$, where the factor of $K\log K$ is for the number of steps in the sort, and the factor of $\log\smalltime$ is for the number of qubits to store each time.
Taking $K$ as in Eq.~\eqref{eq:chooseK2}, we obtain $K\log K=\bigO{\log(\lambda_{Ax} T/\epsilon)}$.

That accounts for the factor of $\log(\lambda_{Ax} T/\epsilon)$ before the square brackets given for the number of additional gates.
Next we consider the contribution to the complexity from the factor of $\log \smalltime$.
The value of $\smalltime$ needs to be chosen to be large enough such that the error from discretisation of the integrals in the Dyson series is no larger than $\epsilon/r$ for each of the $r$ time segments.
Exactly the same derivation as for the time-dependent Hamiltonian in Ref.~\cite{Kieferova2019} holds, where the result is given in Eq.~(58) and the following text of that work.
For completeness we give the derivation of the error in Appendix \ref{app:timedisc}, which gives
\begin{equation}
\label{eq:Dysdicer}
\norm{W_K\left(t_\beta,t_\alpha \right) - \widetilde{W}_K\left(t_\beta,t_\alpha \right)} \in \bigO{\frac{(T/\timesteps)^2}{\smalltime}\max_{t\in [0,T]}\| A'(t) \|}
\end{equation}
where $W_K(t_\alpha,t_\beta)$ is given by Eq.~\eq{trunc}, $t_\alpha-t_\beta=T/r$ and
\begin{equation}
\widetilde{W}_{K} = \sum_{k=0}^{K}\frac{\left(T/r\right)^k}{M^k k!}\sum_{j_1,\cdots, j_k=0}^{M-1}\mathcal{T} A(t_{j_k}) \cdots A(t_{j_1}) . 
\end{equation}
This expression corresponds to discretising the integral over each time variable to approximate it by a sum with $M$ terms.
In Eq.~\eq{Dysdicer} we also have $A'(t)$ which is the time derivative of $A(t)$.

Similarly, it is easy to see that 
\begin{align}
\label{eq:error_part}
\norm{\bm{v}_K\left(t_\beta,t_\alpha\right) - \tilde{\bm{v}}_K \left(t_\beta,t_\alpha\right)}  
\in \mathcal{O}\left(\frac{(T/\timesteps)^2}{\smalltime}\max_{t\in [0,T]}\| \bm{b}'(t) \|+ \frac{(T/\timesteps)^3 b_{\max}}{\smalltime}\max_{t\in [0,T]}\| A'(t) \| \right),
\end{align}
for the error in the integral for the driving term due to time discretisation of Eq.~\eq{part_sol}. The first term in Eq.~\eq{error_part} comes from $k=1$, and the second comes from $k=2$.
See Appendix \ref{app:timedisc} for details of the derivation.

Now, because the error in each evolution operator $W_K$ should be no larger than $\epsilon/r$, the expression for the error given in Eq.~\eqref{eq:Dysdicer} implies that we should choose
\begin{equation}\label{eq:cond1}
    \smalltime = \Omega\left( \frac{T^2}{r\epsilon}\max_{t\in [0,T]}\| A'(t) \| \right) = \Omega\left( \frac{T}{\lambda_{A}\epsilon}\max_{t\in [0,T]}\| A'(t) \| \right) .
\end{equation}
Recall that the error in $W_K$ needs to be no larger than $\epsilon/r$, because we multiply this operator with the vector $\bm{x}$ to give a final error in the vector no larger than $\epsilon x_{\max}$.
The error in $\bm{v}_K$ should be no larger than $\epsilon x_{\max}/r$, because it directly translates to error in $\bm{x}$.
The requirement to bound the first term in Eq.~\eqref{eq:error_part} implies that we should take
\begin{equation}\label{eq:cond2}
    \smalltime = \Omega\left( \frac{T}{\lambda_A\epsilon \, x_{\max}}\max_{t\in [0,T]}\| \bm{b}'(t) \| \right) .
\end{equation}
In order to bound the second term in Eq.~\eqref{eq:error_part}, we should take
\begin{equation}\label{eq:cond3}
    \smalltime = \Omega\left( \frac{T^2 b_{\max}}{\lambda_A^2\epsilon \, x_{\max}}\max_{t\in [0,T]}\| A'(t) \| \right) .
\end{equation}
The smallest $\smalltime$ satisfying all three conditions \eqref{eq:cond1}, \eqref{eq:cond2}, and \eqref{eq:cond3}, is then
\begin{equation}
    \smalltime = \bigO{\frac{T\dervbnd}{\lambda\epsilon}} ,
\end{equation}
with our bound on $\dervbnd$.
Taking the log of this expression gives the first term in square brackets for the number of additional gates in the theorem.

Lastly, we consider the complexity of the rotations used for preparing the $k$ register.
As explained above, these give complexity $\bigO{K\log K\log(\timesteps/\epsilon)}$.
Using Eq.~\eqref{eq:chooseK2} for $K$ gives $K\log K=\bigO{\log(\lambda_{Ax} T/\epsilon)}$, then
$\bigO{\log(\timesteps/\epsilon)}$ gives another factor of $\bigO{\log(\lambda_{A} T/\epsilon)}$, which is given as the second term in the square brackets for the gate complexity.
In practice it would be expected that this term is smaller.

There are also operations needed on the approximately $\log \timesteps$ qubits for preparing the time registers, but this complexity is no more than $\bigO{K}$ for each time step.
That is smaller than the other complexities so can be omitted in the order scaling.
\end{proof}

A similar form of scaling of complexity was given for the spectral method in \cite{Childs2020}.
The statement of the complexity in that work gave the log factors in Theorem 1 just as polylog, but the true scaling can be seen in their proof in Eq.~(8.15).
There the complexity is given as proportional to $n^4$ times a polylog factor coming from solving systems of linear equations.
Their quantity $n$ scales as the log of $1/\epsilon$, $T$, and the derivatives of the solution.

Therefore the approach of \cite{Childs2020} gives scaling as the fourth power of these logs, times whatever factor is obtained from the solution of linear systems.
In that work the linear equation solver of Childs, Kothari, and Somma \cite{CKS} was used, which gives a complexity superlinear in $\log(1/\epsilon)$.
The approach of \cite{Childs2020} can also be used with the optimal solver of \cite{QLSP} with just linear complexity in $\log(1/\epsilon)$, as we do here.

Assuming that optimal solver for a fair comparison, the scaling of the number of calls to a block encoding of $A(t)$ in terms of $T$ and $\epsilon$ in \cite{Childs2020} is as $\log^4(T/\epsilon)\log(1/\epsilon)$ (with the second log from the solver).
In contrast our complexity is $\log(T/\epsilon)\log(1/\epsilon)$.
Moreover, \cite{Childs2020} has the same $\log^4(T/\epsilon)\log(1/\epsilon)$ complexity in calls to block encodings of $\bm{b}(t)$, whereas our complexity is greatly improved to just the $\log(1/\epsilon)$ factor from the linear equation solver.
That is because $\bm{b}(t)$ is only needed once in the Dyson series solution.

A further improvement in our approach is that the number of oracle calls is independent of derivatives of $A(t)$ and $\bm{b}(t)$.
In \cite{Childs2020} the complexity scales as the fourth power of the log of arbitrary-order derivatives through their variable $g'$.
In our approach there is dependence on derivatives only in the number of additional gates (needed for addressing time registers), and that complexity only depends on the first-order derivatives.

\subsection{Time-independent complexity}

In the time-independent case the complexity becomes as follows.

\begin{theorem}
We are given an ordinary linear differential equation of the form
\begin{equation}
\bm{\dot x}(t) = A\bm{x}(t)+\bm{b}, \qquad \bm{x}(0) = \bm{x}_0,
\end{equation}
where $\bm{b} \in \mathbb{C}^{N}$ is a vector function of $t$, $A\in \mathbb{C}^{N\times N}$ is a coefficient matrix with non-positive logarithmic norm, and 
$\bm{x}(t)\in \mathbb{C}^{N}$ is the solution vector as a function of $t$.
The parameters of the differential equation are provided via unitaries $U_A$, $U_b$, and $U_x$ such that
\begin{align}
    _A\!\bra{0} U_A \ket{0}_A &= \frac 1{\lambda_A} A,\\
    U_b \ket{0} &= \frac 1{\lambda_b} \ket{\bm{b}}, \\
    U_x \ket{0} &= \frac 1{\lambda_x} \ket{\bm{x}_0}.
\end{align}
A quantum algorithm can provide an approximation $\hat{\bm{x}}$ of the solution $\ket{\bm{x}(T)}$ satisfying $d_{BW}\left( \hat{\bm{x}} , \ket{\bm{x}(T)}\bra{\bm{x}(T)}\right) \le \epsilon x_{\max}$ using an average number
\begin{align}
    &\bigO{\solrat \lambda T \log\left( 1/\epsilon \right)} && {\rm calls~to~}U_b,U_x,\\
    &\bigO{\solrat \lambda T \log\left( 1/\epsilon \right)\log\left(\frac{\lambda_{Ax} T}\epsilon\right)} && {\rm calls~to~}U_A,\\
    &\bigO{\solrat \lambda T \log\left( 1/\epsilon \right)\log\left(\frac{\lambda_{Ax} T}\epsilon\right) \log\left(\frac{\lambda_{A} T}\epsilon\right)} \quad && {\rm additional~gates},
\end{align}
where $\lambda_{Ax} = \max\left(\lambda_A,{\| \bm{b}\|}/{x_{\max}}\right)$, given constants satisfying
\begin{align}
    \solrat &\ge \frac{x_{\max}}{\left\| \bm{x}(T) \right\|} ,\\
    x_{\max} &\ge \max_{t\in[0,T]}\left\| \bm{x}(t) \right\|.
\end{align}
\end{theorem}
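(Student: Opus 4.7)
The plan is to mirror the structure of the proof of the time-dependent theorem, while exploiting the simplifications already established for the time-independent setting: there are no time registers in the block encoding, the Dyson series reduces to a Taylor series, and the state preparation of $\mathcal{B}$ already has amplitude $\Omega(1)$ by the bound $\|\bm{v}\|\ge (3-e)\|\bm{b}\|\Delta t$ derived above. Consequently the only source of the overhead factor $\solrat$ is the amplitude amplification at the end to project onto the component of the solution at time $T$.

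First I would set up the linear system $\mathcal{A}\mathcal{X}=\mathcal{B}$ exactly as in Eq.~\eq{examp4}, choose $\Delta t\approx 1/\lambda_A$ so that $r=\bigO{\lambda_A T}$, and invoke the bound $\kappa_{\mathcal{A}}=\bigO{r}=\bigO{\lambda_A T}$ established in the time-independent subsection of \sec{cond}. Applying the linear equation solver of Ref.~\cite{QLSP} with tolerance $\epsilon_{\rm QLSP}\propto\epsilon$ then gives a complexity of $\bigO{\lambda_A T \log(1/\epsilon)}$ in calls to the block encodings of $\mathcal{A}$ and $\mathcal{B}$. The relationship between $\epsilon_{\rm QLSP}$ and the Bures--Wasserstein error $\epsilon x_{\max}$ on the final state is established in the same way as in the previous proof, using $\totalrows=2r$ and convexity of $d_{BW}^2$; this part is essentially verbatim from the time-dependent case.

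Next I would bound the costs of the individual block encodings. The Taylor series $V$ in Eq.~\eq{Tayser} is block encoded with $K$ calls to $U_A$ and no calls to $U_b$ or $U_x$, while $\bm{v}$ in Eq.~\eq{bser} uses $K-1$ calls to $U_A$ together with a single call to $U_b$, and $\bm{x}_0$ contributes a single call to $U_x$. Choosing $K$ as in Eq.~\eq{chooseK2} (with $b_{\max}$ replaced by $\|\bm{b}\|$ throughout) yields $K=\bigO{\log(\lambda_{Ax}T/\epsilon)/\log\log(\lambda_{Ax}T/\epsilon)}$ and hence the claimed $\log(\lambda_{Ax}T/\epsilon)$ overhead on the $U_A$ count, while $U_b$ and $U_x$ pick up only the $\log(1/\epsilon)$ factor from the solver. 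Because the state-preparation amplitude is $\Omega(1)$, no amplification is needed on the preparation of $\mathcal{B}$, so the only remaining multiplicative factor is from the final-time amplitude amplification, which reproduces the bound $\bigO{x_{\max}/\|\bm{x}(T)\|}$ exactly as in the time-dependent case (excluding the pathological regime $\|\bm{x}(T)\|<\epsilon x_{\max}$, handled by outputting $\hat{\bm{x}}=0$). This yields the stated lower bound on $\solrat$.

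Finally I would account for the additional gates. The only nontrivial ancilla preparation is the register indexing $k\in\{0,\ldots,K\}$ in the Taylor series, which we prepare with controlled rotations to precision $\bigO{\epsilon/(K r)}$; this contributes $\bigO{K\log(K)\log(r/\epsilon)}=\bigO{\log(\lambda_{Ax}T/\epsilon)\log(\lambda_A T/\epsilon)}$ gates per oracle block, giving the stated gate count when multiplied by the outer $\solrat\lambda_A T\log(1/\epsilon)$ factor. There are no sorting networks and no $\log M$ factor, since the time-discretisation $M$ is absent. The main place that requires care is ensuring that each discretised error budget (linear-equation tolerance, Taylor truncation, $k$-register rotation precision) is taken as a fixed fraction of $\epsilon$ so that their sum remains $\bigO{\epsilon x_{\max}}$; this is routine once the time-dependent proof is in hand.
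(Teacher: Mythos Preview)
Your proposal is correct and follows essentially the same approach as the paper: reduce to the time-dependent argument, noting that (i) the state-preparation amplitude for $\mathcal{B}$ is already $\Omega(1)$ so $\solrat$ loses its second factor, and (ii) the absence of time registers removes the sorting and $\log M$ contributions, leaving only the $k$-register rotation cost $\log(\lambda_A T/\epsilon)$ in the additional-gate count. Your write-up is in fact more detailed than the paper's own proof, which dispatches the theorem in a few sentences by pointing to exactly these two amendments.
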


\begin{proof}
The proof proceeds in exactly the same way as for the time-dependent case, except for a few minor amendments.
First, the state preparation succeeds with probability at least a constant, so the factor from this amplitude does not appear in $\solrat$.
The other difference in the proof is that the number of additional gates is greatly reduced.
We no longer need to perform any arithmetic on time registers, so the $\log \left( {T \dervbnd}/{\lambda\epsilon} \right)$ factor from the time-dependent cost is removed.
However, we do still need to perform rotations and controlled rotations to prepare the register with $k$.
As a result, we still have the extra factor of $\log\left({\lambda T}/\epsilon\right)$.
\end{proof}

\section{Conclusion}
\label{sec:conc}
We have given a quantum algorithm to solve a time-dependent differential equation in the sense that it outputs a quantum state with amplitudes encoding the solution vector.
This is a distinct method than that used for the time-independent case in \cite{BerryCMP17}, where different orders of the sum were encoded in successive lines of the block matrix.
In our approach the block matrix has each successive line encoding a new time step, and employs a block encoding of the Dyson series in a similar way as for Hamiltonian simulation in \cite{Kieferova2019}.

This approach makes the analysis of the complexity simpler than in \cite{BerryCMP17}, because it is not necessary to account for the extra lines in the encoding.
As in \cite{Krovi}, the expression for the complexity is simplified by using a condition on the logarithmic norm of $A(t)$.
This approach avoids needing $A(t)$ to be diagonalisable.
These techniques also give a significantly simplified result for the complexity in the time-independent case.

The complexity is, up to logarithmic factors, linear in the total evolution time $T$ and the constant $\lambda$ in the block encoding of $A(t)$.
By applying the optimal quantum linear equation solver of \cite{QLSP}, we obtain excellent scaling of the complexity in $\log(1/\epsilon)$.
The number of calls to encodings of the driving and initial state is linear in $\log(1/\epsilon)$, whereas the number of calls to the block encoding of $A(t)$ is quadratic in $\log(1/\epsilon)$.

The complexity only depends logarithmically on the first derivatives of $A(t)$ and the driving, and only the additional gates depend on these derivatives not the calls to the block encoding.
We have expressed the complexity in terms of maxima of these quantities.
It is also possible to express the complexity in terms of averages via the approach in \cite{Low2018}, though the derivation is considerably more complicated.
A bound could also be given in terms of the total variational distance as in \cite{Fang}.
That is useful when there are discontinuities.
Those bounds are just from a more careful analysis of the error in the discretisation of the integrals, rather than a different algorithm.

The way the result for the complexity is expressed is complicated by the need to account for the number of steps of amplitude amplification, which in turn depends on the norms of the $\bm{v}$ vectors.
This full expression is to account for pathological cases, which could potentially be ruled out by a more careful analysis to give a simplified expression.

Our result gives a significant improvement in the log factors over the spectral method from \cite{Childs2020}.
This raises the question of whether our complexity is optimal, or whether further improvements are possible.
We have log factors appearing from the optimal linear equation solver and from the order of the Dyson series, so any further improvement would require a radically different approach.

\section*{Acknowledgements}

We thank Maria Kieferova for helpful discussions.
DWB worked on this project under a sponsored research agreement with Google Quantum AI.
DWB is supported by Australian Research Council Discovery Projects DP190102633, DP210101367, and DP220101602.

%\bibliographystyle{plainnat}
%\bibliography{references}

\onecolumn\newpage
\appendix

\section{Error estimates for time discretisation}
\label{app:timedisc}

First we consider the error in the sum approximation of the integrals in the definition of $W_K$.
We can write $W_K$ as
\beq
W_K(t_\beta,t_\alpha) = \sum_{k=0}^{K} \frac 1{k!} \int_{t_\alpha}^{t_\beta} dt_1 \int_{t_\alpha}^{t_\beta} dt_2 \cdots \int_{t_\alpha}^{t_\beta} dt_k \, \mathcal{T} A(t_1) A(t_2) \cdots A(t_k) \, ,
\eeq
where $\mathcal{T}$ reorders the $A$ matrices in descending time order, and we are taking
$t_\alpha \in \{0,\Delta t,2\Delta t,\cdots, T-\Delta t\}$ and $t_\beta=t_\alpha+\Delta t$.
The sum approximation of $W_K$ is then
\begin{equation}
\widetilde{W}_K(t_\beta,t_\alpha) = \sum_{k=0}^{K}\frac{\delta t^k}{k!}\sum_{j_1,\cdots, j_k=0}^{\smalltime-1}\mathcal{T} A(t_{j_1}) \cdots A(t_{j_{k-1}})A(t_{j_k})\, ,
\end{equation}
where $\delta t=\Delta t/\smalltime$.
The sum can be rewritten as
\beq
\widetilde{W}_K(t_\beta,t_\alpha) = \sum_{k=0}^{K} \frac 1{k!} \int_{t_\alpha}^{t_\beta} dt_1 \int_{t_\alpha}^{t_\beta} dt_2 \cdots \int_{t_\alpha}^{t_\beta} dt_k \, \mathcal{T} A(t_{j_1}) A(t_{j_2}) \cdots A(t_{j_k}) \, ,
\eeq
where $t_{j_\ell}$ is $t_\ell$ rounded down to the nearest multiple of $\delta t$ as
\begin{equation}
    t_{j_\ell} = t_\alpha + \delta t\lfloor (t_\ell - t_\alpha)/\delta t\rfloor .
\end{equation}
This integral form of the sum is obvious, because each $t_{j_\ell}$ is constant over the interval $\delta t$.

Then we can write the error due to the time discretisation as
\begin{align}
    \norm{W_K(t_\beta,t_\alpha) - \widetilde{W}_K(t_\beta,t_\alpha)} &= \left\|\sum_{k=0}^K \frac 1{k!} \int_{t_\alpha}^{t_\beta}dt_1\int_{t_\alpha}^{t_\beta}dt_2 \cdots \int_{t_\alpha}^{t_\beta}dt_k \, \mathcal{T} \left[ A(t_1) A(t_2) \cdots A(t_k) \right.\right. \nn
    & \quad \left.\left. - A(t_{j_1}) A(t_{j_2}) \cdots A(t_{j_k}) \right] \vphantom{\sum_{k=0}^K}\right\| \nn
    &\le  \sum_{k=0}^K \frac 1{k!} \int_{t_\alpha}^{t_\beta}dt_1\int_{t_\alpha}^{t_\beta}dt_2 \cdots \int_{t_\alpha}^{t_\beta}dt_k \, \mathcal{T} \left\| A(t_1) A(t_2) \cdots A(t_k)\right.  \nn
    & \quad  \left. - A(t_{j_1}) A(t_{j_2}) \cdots A(t_{j_k}) \right\| ,
\end{align}
where the convention is taken that the same ordering of $t_{j_\ell}$ is used as for $t_{\ell}$.
For each term for a given $k$, we can upper bound it as
\begin{align}
    &\sum_{\ell=1}^{k} \int_{t_\alpha}^{t_\beta}dt_1\int_{t_\alpha}^{t_\beta}dt_2 \cdots \int_{t_\alpha}^{t_\beta}dt_k \mathcal{T}\left\| A(t_1)\cdots A(t_\ell)A(t_{j_{\ell+1}}) \cdots A(t_{j_k}) \right. \nn
    & \quad \left. - A(t_1)\cdots A(t_{\ell-1})A(t_{j_\ell}) \cdots A(t_{j_k}) \right\| \nn
    & \le \sum_{\ell=1}^{k} \int_{t_\alpha}^{t_\beta}dt_1\int_{t_\alpha}^{t_\beta}dt_2 \cdots \int_{t_\alpha}^{t_\beta}dt_k \mathcal{T}\norm{ A(t_1)\cdots [A(t_\ell)-A(t_{j_\ell})]A(t_{j_{\ell+1}}) \cdots A(t_{j_k})  } \nn
    & = \sum_{\ell=1}^{k} \int_{t_\alpha}^{t_\beta}dt_1\int_{t_\alpha}^{t_\beta}dt_2 \cdots \int_{t_\alpha}^{t_\beta}dt_k \mathcal{T}\norm{ A(t_1)\cdots \left[\int_{t_{j_\ell}}^{t_\ell} ds\, A'(s)\right]A(t_{j_{\ell+1}}) \cdots A(t_{j_k})  } \nn
    & \le A_{\max}^{k-1} \sum_{\ell=1}^{k} \int_{t_\alpha}^{t_\beta}dt_1\int_{t_\alpha}^{t_\beta}dt_2 \cdots \int_{t_\alpha}^{t_\beta}dt_k \int_{t_{j_\ell}}^{t_\ell} ds\, \norm{A'(s)} \nn
    & \le A_{\max}^{k-1} \max_{s\in [t_\alpha,t_\beta]}\norm{A'(s)} \sum_{\ell=1}^{k} \int_{t_\alpha}^{t_\beta}dt_1\int_{t_\alpha}^{t_\beta}dt_2 \cdots \int_{t_\alpha}^{t_\beta}dt_k \int_{t_{j_\ell}}^{t_\ell} ds\,  \nn
    & \le A_{\max}^{k-1} \frac{\delta t\, \Delta t^k}{2} \max_{s\in [t_\alpha,t_\beta]}\norm{A'(s)} .
\end{align}
We can therefore upper bound the error as
\begin{align}
    \norm{W_K(t_\beta,t_\alpha) - \widetilde{W}_K(t_\beta,t_\alpha)} &\le  \sum_{k=1}^K A_{\max}^{k-1} \frac{\delta t \Delta t^k}{k!} \max_{s\in [t_\alpha,t_\beta]}\norm{A'(s)} \nn
    &<  \delta t\, \Delta t \max_{s\in [t_\alpha,t_\beta]}\norm{A'(s)} \sum_{k=1}^\infty \frac{[A_{\max} \Delta t]^k}{k!}  \nn
    &= \delta t\, \Delta t \max_{s\in [t_\alpha,t_\beta]}\norm{A'(s)} \left[ \exp\left( A_{\max} \Delta t\right) -1 \right].
\end{align}
Because we choose $A_{\max} \Delta t\le 1$, the error due to the discretisation can be upper bounded as
\begin{align}
    \norm{W_K(t_\beta,t_\alpha) - \widetilde{W}_K(t_\beta,t_\alpha)} &= \bigO{ \frac{(T/\timesteps)^2}{\smalltime} \max_{s\in [t_\alpha,t_\beta]}\norm{A'(s)} } .
\end{align}
This gives the bound used in Eq.~\eq{Dysdicer}.

Next we bound the error of the particular solution Eq.~\eq{part_sol} when the truncated Dyson series 
\begin{equation}
\bm{v}_K(t_\beta,t_\alpha) = \sum_{k=1}^{K} \frac 1{k!} \int_{t_\alpha}^{t_{\beta}} dt_1 \int_{t_\alpha}^{t_{\beta}} dt_2 \cdots \int_{t_\alpha}^{t_{\beta}} dt_{k} \, \mathcal{T} A(t_1) A(t_2) \cdots A(t_{k-1})\bm{b}(t_{k})\, ,
\end{equation}
is approximated by the discretised integrals
\begin{equation}
\tilde{\bm{v}}_K(t_\beta,t_\alpha) = \sum_{k=2}^{K}\frac{\delta t^k}{k!}\sum_{j_1,\cdots, j_k=0}^{\smalltime-1}\mathcal{T} A(t_{j_1}) \cdots A(t_{j_{k-1}})\bm{b}(t_{j_k}) + \frac {\Delta t}{ \smalltime}\sum_{j=0}^{\smalltime-1}\bm{b}(t_{j}) \, .
\end{equation}
Here the time ordering operator $\mathcal{T}$ is used to mean that the times are sorted in ascending order, \emph{not} that the order of $A$ and $\bm{b}$ is changed, because $\bm{b}$ must always go on the right.

Again we can express the summation in the form of an integral
\begin{equation}
\tilde{\bm{v}}_K(t_\beta,t_\alpha) = \sum_{k=1}^{K} \frac 1{k!} \int_{t_\alpha}^{t_{\beta}} dt_1 \int_{t_\alpha}^{t_{\beta}} dt_2 \cdots \int_{t_\alpha}^{t_{\beta}} dt_{k} \, \mathcal{T} A(t_{j_1}) A(t_{j_2}) \cdots A(t_{j_{k-1}})\bm{b}(t_{k})\, ,
\end{equation}
with the times $t_{j_\ell}$ rounded down to the nearest multiple of $\delta t$.
Now we can upper bound the error in the discretised integrals for each $k$ as
\begin{align}
    &\int_{t_\alpha}^{t_{\beta}} dt_1 \!\int_{t_\alpha}^{t_\beta} dt_2 \cdots \!\int_{t_\alpha}^{t_{\beta}} dt_{k} \, \mathcal{T}\norm{A(t_{1}) A(t_{2}) \cdots A(t_{{k-1}})\bm{b}(t_{j_k})-A(t_{j_1}) A(t_{j_2}) \cdots A(t_{j_{k-1}})\bm{b}(t_{j_k})} \nn
    &\le \sum_{\ell=1}^{k-1} \int_{t_\alpha}^{t_{\beta}} dt_1 \int_{t_\alpha}^{t_\beta} dt_2 \cdots \int_{t_\alpha}^{t_\beta} dt_{k} \, \mathcal{T}\left\| A(t_{1}) \cdots A(t_{{\ell}}) A(t_{j_{\ell+1}}) \cdots A(t_{j_{k-1}})\bm{b}(t_{j_k}) \right. \nn
    & \quad \left. -A(t_{1})  \cdots A(t_{j_{\ell}}) \cdots A(t_{j_{k-1}})\bm{b}(t_{j_k})\right\| \nn
   & \quad + \int_{t_\alpha}^{t_{\beta}} dt_1 \int_{t_\alpha}^{t_\beta} dt_2 \cdots \int_{t_\alpha}^{t_\beta} dt_{k} \, \mathcal{T}\norm{A(t_{1})  \cdots A({t_{k-1}})\bm{b}(t_{k})-A(t_{1}) \cdots A(t_{{k-1}})\bm{b}(t_{j_k})}\nn
   &= \sum_{\ell=1}^{k-1} \int_{t_\alpha}^{t_{\beta}} dt_1 \int_{t_\alpha}^{t_\beta} dt_2 \cdots \int_{t_\alpha}^{t_\beta} dt_{k} \, \mathcal{T}\norm{A(t_{1}) \cdots \int_{t_{j_\ell}}^{t_{\ell}} ds\, A'(s) A(t_{j_{\ell+1}}) \cdots A(t_{{k-1}})\bm{b}(t_{j_k})} \nn
   & \quad + \int_{t_\alpha}^{t_{\beta}} dt_1 \int_{t_\alpha}^{t_\beta} dt_2 \cdots \int_{t_\alpha}^{t_\beta} dt_{k} \, \mathcal{T}\norm{A(t_{1})  \cdots A({t_{k-1}})\int_{t_{j_k}}^{t_{k}}ds\, \bm{b}'(s)} \nn
   & \le A_{\max}^{k-2} b_{\max} \sum_{\ell=1}^{k-1} \int_{t_\alpha}^{t_{\beta}} dt_1 \int_{t_\alpha}^{t_\beta} dt_2 \cdots \int_{t_\alpha}^{t_\beta} dt_{k} \, \int_{t_{j_\ell}}^{t_{\ell}} ds\,\norm{A'(s)}\nn
   & \quad + A_{\max}^{k-1}\int_{t_\alpha}^{t_{\beta}} dt_1 \int_{t_\alpha}^{t_\beta} dt_2 \cdots \int_{t_\alpha}^{t_\beta} dt_{k} \, \int_{t_{j_k}}^{t_{k}}ds\, \norm{\bm{b}'(s)}\nn
   & \le A_{\max}^{k-2} b_{\max} \sum_{\ell=1}^{k-1} \frac{\Delta t^k \delta t}{2} \max_{s\in[t_\alpha,t_\beta]}\norm{A'(s)} 
   + A_{\max}^{k-1} \frac{\Delta t^k \delta t}{2} \max_{s\in[t_\alpha,t_\beta]}\norm{\bm{b}'(s)} .
\end{align}
Now summing over $k$ we have
\begin{align}
   & \norm{{\bm{v}}_K(t_\beta,t_\alpha)-\tilde{\bm{v}}_K(t_\beta,t_\alpha)} \nn
   &< \sum_{k=1}^\infty
    A_{\max}^{k-2} b_{\max} \frac{k-1}{k!} \frac{\Delta t^k \delta t}{2} \max_{s\in[t_\alpha,t_\beta]}\norm{A'(s)}    + \sum_{k=1}^\infty A_{\max}^{k-1} \frac{\Delta t^k \delta t}{2k!} \max_{s\in[t_\alpha,t_\beta]}\norm{\bm{b}'(s)} \nn
   &= \sum_{k=1}^\infty
    A_{\max}^{k-2} b_{\max} \frac{\Delta t^k \delta t}{2(k-1)!} \max_{s\in[t_\alpha,t_\beta]}\norm{A'(s)} 
    -\sum_{k=1}^\infty
    A_{\max}^{k-2} b_{\max} \frac{\Delta t^k \delta t}{2k!} \max_{s\in[t_\alpha,t_\beta]}\norm{A'(s)} \nn
& \quad  
   + \sum_{k=1}^\infty A_{\max}^{k-1} \frac{\Delta t^k \delta t}{2k!} \max_{s\in[t_\alpha,t_\beta]}\norm{\bm{b}'(s)} \nn
   &= \sum_{k=0}^\infty
    A_{\max}^{k-1} b_{\max} \frac{\Delta t^{k+1} \delta t}{2k!} \max_{s\in[t_\alpha,t_\beta]}\norm{A'(s)} 
    - b_{\max} \frac{\delta t}{2A_{\max}^2} \max_{s\in[t_\alpha,t_\beta]}\norm{A'(s)}\left[ \exp\left( A_{\max} \Delta t\right)-1\right]  \nn
    & \quad 
   + \frac{\delta t}{2A_{\max}} \max_{s\in[t_\alpha,t_\beta]}\norm{\bm{b}'(s)}\left[ \exp\left( A_{\max} \Delta t\right)-1\right] \nn
   &= 
    b_{\max} \frac{\Delta t\delta t}{2A_{\max}} \max_{s\in[t_\alpha,t_\beta]}\norm{A'(s)} \exp\left( A_{\max} \Delta t\right)\nn
& \quad    - b_{\max} \frac{\delta t}{2A_{\max}^2} \max_{s\in[t_\alpha,t_\beta]}\norm{A'(s)}\left[ \exp\left( A_{\max} \Delta t\right)-1\right]  \nn
    & \quad 
   + \frac{\delta t}{2A_{\max}} \max_{s\in[t_\alpha,t_\beta]}\norm{\bm{b}'(s)}\left[ \exp\left( A_{\max} \Delta t\right)-1\right] \nn
   &= 
    b_{\max} \frac{\delta t}{2A_{\max}^2} \max_{s\in[t_\alpha,t_\beta]}\norm{A'(s)} \left[ 1- \left(1-A_{\max}\Delta t\right) \exp\left( A_{\max} \Delta t\right)
    \right]\nn
& \quad  
   + \frac{\delta t}{2A_{\max}} \max_{s\in[t_\alpha,t_\beta]}\norm{\bm{b}'(s)}\left[ \exp\left( A_{\max} \Delta t\right)-1\right].
\end{align}
Now with the choice that $A_{\max}\Delta t\le 1$, the upper bound on the error becomes
\begin{equation}\label{eq:erk3plus}
    \bigO{b_{\max} \frac{\Delta t^3}{\smalltime} \max_{s\in[t_\alpha,t_\beta]}\norm{A'(s)} 
   + \frac{\Delta t^2}{\smalltime} \max_{s\in[t_\alpha,t_\beta]}\norm{\bm{b}'(s)}},
\end{equation}
which is equivalent to the upper bound on the error in Eq.~\eq{error_part}.

\end{document}